\documentclass[journal]{IEEEtran}

\usepackage{xcolor}
\usepackage{cite}
\usepackage{graphicx}
\usepackage{amsmath} 
\usepackage{amssymb}  
\usepackage{amsfonts}
\usepackage{mathrsfs}
\usepackage{mathtools}
\usepackage{xpatch}
\usepackage{tablefootnote}
\usepackage{pifont}
\usepackage[bbgreekl]{mathbbol}
\usepackage{multirow}
\usepackage{makecell}
\usepackage{booktabs}
\usepackage[caption=false,font=footnotesize,subrefformat=parens,labelformat=parens]{subfig}
\usepackage{enumitem}
\usepackage{amsthm} 
\usepackage{tikz} 
\usepackage{booktabs,threeparttable,xcolor}
\newcommand{\Plus}{\mathord{\text{\ding{58}}}}
\newcommand{\Xmark}{\mathord{\text{\ding{54}}}}

\definecolor{mycolor1}{rgb}{0.00000,0.44700,0.74100}%
\definecolor{mycolor2}{rgb}{0.85000,0.32500,0.09800}%
\definecolor{mycolor3}{rgb}{0.92900,0.69400,0.12500}%
\definecolor{mycolor4}{rgb}{0.49400,0.18400,0.55600}%
\definecolor{mycolor5}{rgb}{0.46600,0.67400,0.18800}%
\definecolor{mycolor6}{rgb}{0.30100,0.74500,0.93300}%

\newcommand{\blueline}{\raisebox{2pt}{\tikz{\draw[-,mycolor1,solid,line width = 1pt](0,0) -- (3mm,0);}}}
\newcommand{\dashedblueline}{\raisebox{2pt}{\tikz{\draw[-,mycolor1,dashed,line width = 1pt](0,0) -- (3mm,0);}}}
\newcommand{\redmark}{\tikz[baseline=-0.25em]\node[mycolor2,line width = 1pt, mark size=2.5pt]{\pgfuseplotmark{x}};}
\newcommand{\redline}{\raisebox{2pt}{\tikz{\draw[-,mycolor2,solid,line width = 1pt](0,0) -- (3mm,0);}}}

\newcommand{\yellowline}{\raisebox{2pt}{\tikz{\draw[-,mycolor3,solid,line width = 1pt](0,0) -- (3mm,0);}}}
\newcommand{\blackline}{\raisebox{2pt}{\tikz{\draw[-,black,solid,line width = 1pt](0,0) -- (3mm,0);}}}
\newcommand{\greyd}{\raisebox{2pt}{\tikz{\draw[-,black!50!white,dashed,line width = 1pt](0,0) -- (3mm,0);}}}
\newcommand{\greydd}{\raisebox{2pt}{\tikz{\draw[-,black!50!white,dash dot,line width = 1pt](0,0) -- (4mm,0);}}}

\newcommand{\purpleline}{\raisebox{2pt}{\tikz{\draw[-,mycolor4,solid,line width = 1pt](0,0) -- (3mm,0);}}}
\newcommand{\greenline}{\raisebox{2pt}{\tikz{\draw[-,mycolor5,solid,line width = 1pt](0,0) -- (3mm,0);}}}
\newcommand{\cyanline}{\raisebox{2pt}{\tikz{\draw[-,mycolor6,solid,line width = 1pt](0,0) -- (3mm,0);}}}

\DeclareRobustCommand{\Wyegnd}{\mathbin{
\tikz[x=1pt, y=1pt, scale=0.85]{\draw
    (-1.4, 0) -- (1.4, 0)
    (-1, -1) -- (1, -1)
    (-0.4, -2) -- (0.4, -2)
    (0, 0) -- ++(0, 2) -- ++(-3, 0) coordinate (tmp)
    -- +(0, -4)
    (tmp) +(45:4) -- (tmp) -- +(135:4)
    ;}}}

\DeclareRobustCommand{\wyegnd}{\mathbin{
\tikz[x=1pt, y=1pt, scale=0.7]{\draw
    (-1.4, 0) -- (1.4, 0)
    (-1, -1) -- (1, -1)
    (-0.4, -2) -- (0.4, -2)
    (0, 0) -- ++(0, 2) -- ++(-3, 0) coordinate (tmp)
    -- +(0, -4)
    (tmp) +(45:4) -- (tmp) -- +(135:4)
    ;}}}

\makeatletter
\newcolumntype{"}{@{\hskip\tabcolsep\vrule width 1pt\hskip\tabcolsep}}
\makeatother

\makeatletter
\xpatchcmd{\@thm}{\thm@headpunct{.}}{\thm@headpunct{}}{}{}
\makeatother

\DeclareSymbolFontAlphabet{\mathbb}{AMSb}
\DeclareSymbolFontAlphabet{\mathbbl}{bbold}

\DeclareMathOperator{\diag}{diag}

\DeclareMathOperator{\diff}{d}

\DeclareMathOperator*{\argmin}{argmin}

\makeatletter
\renewcommand*\env@matrix[1][\arraystretch]{%
  \edef\arraystretch{#1}%
  \hskip -\arraycolsep
  \let\@ifnextchar\new@ifnextchar
  \array{*\c@MaxMatrixCols c}}
\makeatother

\newcommand{\R}{{\mathbb R}}

\newcommand{\mc}{\mathcal}
\newcommand{\ddt}{\tfrac{\diff}{\diff \!t}}
\newcommand{\norm}[1]{\left \lVert #1 \right \rVert}

\newtheorem{proposition}{Proposition}

\newtheorem{remark}{Remark}

\begin{document}

\title{Constraint-Aware Grid-Forming Control for Current Limiting}

\author{Dominic Gro\ss{},~\IEEEmembership{Senior Member,~IEEE}\thanks{
  
This material is based upon work supported by the U.S. Department of Energy's Office of Energy Efficiency and Renewable Energy (EERE) under the Solar Energy Technologies Office and the National Science Foundation under Grant 2143188. The views expressed herein do not necessarily represent the views of the U.S. Department of Energy or the United States Government.
D. Gro\ss{} is with the Department of Electrical and Computer Engineering, University of Wisconsin-Madison, Madison, WI 53706, USA (e-mail: dominic.gross@wisc.edu)}}

\maketitle

\begin{abstract}
This work develops a constraint-aware grid-forming (GFM) control that explicitly accounts for current limits and modulation limits within the GFM oscillator dynamics generating the GFM voltage reference (i.e., phase angle and magnitude). Broadly speaking, the voltage reference generated by the constraint-aware GFM control minimizes the deviation from conventional unconstrained GFM droop control, while respecting current and modulation limits. The resulting GFM control achieves fast current limiting while preserving transient stability, e.g., exhibiting infinite critical clearing time. To develop the control, we first characterize and analyze the set of converter voltages that do not result in constraint violations. Next, an efficient algorithm for projecting voltages onto the feasible set is developed. Subsequently, these results are used to restrict the dynamics of GFM droop control to the set of feasible voltages. Finally, detailed simulation studies and hardware experiments are used to illustrate and validate the response to short-circuit faults and phase jumps.
\end{abstract}

\begin{IEEEkeywords}
  Grid-forming control, current limiting, fault ride through
\end{IEEEkeywords}

\section{Introduction}
\IEEEPARstart{P}{ower} electronics have become ubiquitous in electric power systems, interfacing renewable generation, energy storage systems, high voltage direct current (HVDC) transmission, and a wide range of industrial
and domestic loads. This large-scale integration of converter-interfaced
resources results in significantly different system dynamics
that challenge today's operating and control paradigms~\cite{GD2022}.

The majority of converter-interfaced resources deployed today use
grid-following (GFL) control that assumes a stable ac voltage waveform (i.e., frequency and magnitude) at the converter terminal. While GFL converters can provide basic grid support functions (e.g., primary frequency control), their large-scale integration jeopardizes stability of power systems~\cite{MDH+18}. In contrast, GFM controls such as droop control \cite{CDA93}, virtual synchronous machine control \cite{DSF2015}, and (dispatchable) virtual oscillator control \cite{JD+2014,GCB+2019}, self-synchronize and impose a stable ac voltage waveform at the converter terminal. 

{Both GFL and GFM control need to account for semiconductor current limits. Broadly speaking, GFL converters are  current controlled and limiting the current reference does not severely degrade functionality~\cite{JY+2018}.} In contrast, GFM control under semiconductor current limits~\cite{DPD+2018,BCL+2024} remains a significant challenge. Existing overcurrent limiters are typically categorized into methods that limit current references and methods that manipulate voltage or power references to limit current (see \cite{BCL+2024} for an in-depth discussion).

An often overlooked aspect is the timescale and control layer at which a limiter acts. Four broad classes of limiters are shown in Fig.~\ref{fig:conv}. An outer GFM control typically provides a voltage reference (i.e., phase angle and magnitude) and inner controls may be present depending on the control architecture. {Due to its subcycle response and hardware implementation,} current limiting at the switch-level (see Fig.~\ref{fig:conv:switchlevel}) is {commonly seen as the most dependable approach to protecting semiconductor switches. However, it completely overrides the converter controls and hence is undesirable from a converter control and grid perspective}.

\begin{figure}[b!]
\subfloat[Converter with switch-level current limiter (e.g., pulse-blocking).\label{fig:conv:switchlevel}]{\makebox[1\columnwidth][c]{\includegraphics[width=0.85\columnwidth]{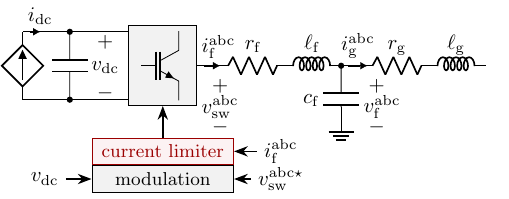}}}\hfill
\subfloat[Inner current limiting (e.g., current saturation) limits the current reference provided to an inner current loop with optional anti-windup (grey).\label{fig:conv:dualloop}]{\makebox[1\columnwidth][c]{\includegraphics[width=0.95\columnwidth]{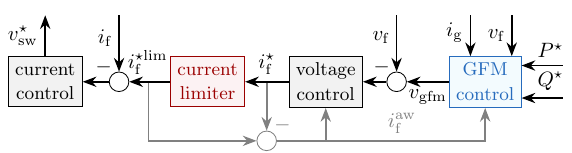}}}\hfill
\subfloat[Parallel current limiting (e.g., threshold virtual impedance) modifies the output $v_\textup{gfm}$ of the GFM control.\label{fig:conv:tvi}]{\makebox[1\columnwidth][c]{\includegraphics[width=0.7\columnwidth]{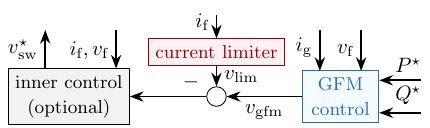}}}\hfill
\subfloat[Outer current limiting acts on the states of the GFM control through power (or frequency and voltage) references.\label{fig:conv:outer}]{\makebox[1\columnwidth][c]{\includegraphics[width=0.8\columnwidth]{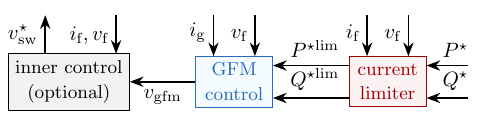}}}
\caption{Voltage source converter model and broad categorization of current limiting controls for GFM converters.\label{fig:conv}}
\end{figure}

Inner current limiting (see Fig.~\ref{fig:conv:dualloop}) limits current references tracked by underlying current controls and is prone to a loss of synchronization~\cite{XHZ+2016} {and short critical clearing times that may be challenging to meet by power system protection}. {Moreover, the presence of inner control loops can result in (i) challenging tuning problems under variable grid strength~\cite{QGC+18}, and (ii) significant bandwidth limitations in high-power converters with low switching frequency that impact both performance during nominal operation~\cite{GG2023} and fault ride through~\cite{DPD+2018}.}

To {retain some GFM features,} parallel current limiters (see Fig.~\ref{fig:conv:tvi}) such as threshold virtual resistance~\cite{FCG2010} and threshold virtual impedance~\cite{HL2011} modify the GFM voltage reference based on current measurements. However, the effectiveness may be reduced by bandwidth limitations of downstream current and voltage controls~\cite{DPD+2018}. Various variable and adaptive parallel limiters have been proposed to improve critical clearing time and stability margins~\cite{QWW+2023}. {Nonetheless, typical parallel limiters may be ineffective under large phase angle differences~\cite{ZBS+2023} and are, therefore, still subject to finite critical clearing time. While parallel hybrid limiters}  using voltage measurements can limit current when large phase angle differences occur (e.g., during fault clearing)~\cite{ZBS+2023}{, they may be ineffective under overcurrent induced by loads or large frequency deviations because they cannot significantly modify the converter frequency.}

Finally, outer limiters  (see Fig.~\ref{fig:conv:outer})  modify inputs to the GFM control to enhance transient stability {and typically do not exhibit transient stability challenges upon fault clearing (i.e., have large or infinite critical clearing time). However, they can result in significant current limit violations upon fault inception due to, e.g., bandwidth limitations of GFM and inner controls. In particular,} outer limiters, such as power angle limiting, are typically based on steady-state circuit models and {either (i) implicitly rely on switch-level current limiting to avoid exposing semiconductor switches to} transient overcurrent~\cite{AA2023,GL2023}{, or (ii)} combine outer limiting with inner current limiters~\cite{GD2019}. Outer limiters can further be categorized into methods that only use current measurements~\cite{GD2019,GL2023} and methods that only use voltage measurements~\cite{EUJ2022,AA2023} to modify the GFM voltage magnitude~\cite{CPO2020} or GFM voltage phasor~\cite{EUJ2022}. Notably, GFM converters relying on voltage-based limiting can resemble a controlled current source during current limiting which may lead to destabilizing effects in weak grids similar to GFL control~\cite[Sec.~III-B3]{BCL+2024}.

Recent works often combine parallel limiters, inner limiters, and anti-windup feedback~\cite{ALJ+2022,ARY+2023,HDH+2025} to retain as many GFM control features as possible under constraints. However, {these limiters may have many tuning parameters with complex interdependencies} and typically override upstream {GFM} control {transiently}, which raises stability concerns. {Moreover, combined methods typically focus on voltage phase angle forming~\cite{HDH+2025} and may exhibit challenges in scenarios that require significant modifications to phase angle or frequency (e.g., large frequency disturbances).}

Transient stability of GFM controls with inner current limiters can also be enhanced by mitigating oscillations between control modes using model predictive control with steady-state converter model and binary variables that model the activation of inner current limiters~\cite{AGP+2024}. On the other hand, predictive controls such as finite control set model predictive control focus on tracking current references (e.g., GFL control)~\cite{CKK+2008,QAP+2012} and their value for grid-connected converters is the subject of significant debate~\cite[Sec.-VIII]{KG2020}. Similarly, indirect model predictive control for power electronics is typically applied to current reference tracking~\cite{RKC+2022}.

{Finally, the aforementioned methods exclusively focus on current limits and do not provide a pathway to systematic handling of multiple constraints (e.g., current, modulation, power). In contrast, constraint-aware GFM control  
\begin{enumerate}
\item does not require inner control loops and avoids their sensitivity to grid strength and bandwidth limitations,
\item has few control parameters with clear interpretation,
\item exhibits infinite critical clearing time and is effective under phase jumps, overload, as well as large frequency perturbations, and
\item provides a pathway to systematically accounting for a wide range of converter constraints.
\end{enumerate}
}

{While asymmetric faults are significantly more common in transmission systems~\cite{EAC+1992} and have received significant attention in the recent literature~\cite{REL2021,BVKS+2022,ARY+2023,BG2023,HDH+2025}, this work revisits the symmetric case. In particular, the goal of this work is} to develop a systematic foundation for {GFM control under constraints that resolves conceptual and practical challenges associated with existing current limiters}. 
Notably, existing works (i) do not start from a specification of how a GFM converter should respond when encountering a limit, and (ii) augment GFM controls with specific ad-hoc implementations of current limiters that act in parallel or downstream of unconstrained GFM controls. 

Instead, this work formalizes the specification that a GFM converter should minimize the deviation of the converter from unconstrained GFM control dynamics (e.g., \cite{CDA93,DSF2015,JD+2014,GCB+2019}) when encountering a converter constraint (e.g., modulation or current limit) and explicitly embed a constraint model in the GFM control. To this end, the first contribution of this work is to develop and analyze a lightweight prediction model to map current constraints to converter voltage constraints. This construction of the constraint set naturally incorporates both filter current and voltage measurements into the current limiter.

The second contribution is an efficient numerical algorithm for projecting voltages onto the constraint set, i.e., given a reference voltage obtained from unconstrained GFM control the algorithm computes the closest voltage in the constraint set. These two preliminary results are the basis for the main contribution of this work, a constraint-aware GFM droop control that explicitly incorporates constraint handling. The performance, robustness, and dynamic interactions of converters are studied using simulations. Finally, hardware experiments at various short-circuit ratios are used to validate and illustrate the results. Notably, the proposed control is able to limit current near instantaneously and maintain operation within current limits and synchronization under severe events (e.g., $180$-degree phase jumps). Moreover, in contrast to control architectures that rely on inner current and voltage controls, the tuning of the proposed control is largely independent of the grid strength.

This manuscript is organized as follows. Section~\ref{sec:objective}  introduces converter constraints and control objectives. The set of feasible voltages is constructed in Sec.~\ref{sec:feas} and an efficient algorithm for projecting onto the set of feasible voltages is developed in Sec.~\ref{sec:proj}. The constraint-aware GFM control is presented in Sec.~\ref{sec:gfm}. The control is evaluated in simulation in Sec.~\ref{sec:scib} and Sec.~\ref{sec:twobus}. Finally, hardware experiments are presented in Sec.~\ref{sec:experiment} and conclusions are provided in Sec.~\ref{sec:concl}

\subsection*{Notation} We use $\mathbb{R}$ and $\mathbb N$ to denote the set of real and natural numbers and define, e.g., $\mathbb{R}_{\geq 0}\coloneqq \{x \in \mathbb R \vert x \geq 0\}$. For column vectors $x\in\mathbb{R}^n$ and $y\in\mathbb{R}^m$ we define $(x,y) \coloneqq [x^\mathsf{T}, y^\mathsf{T}]^\mathsf{T} \in \mathbb{R}^{n+m}$ and $I_n$, $\mathbbl{0}_{n\times m}$, and $\mathbbl{0}_{n}$, denote the $n$-dimensional identity matrix, $n \times m$ zero matrix, and column vectors of zeros of length $n$. Given a vector $x\in\mathbb{R}^n$ and positive definite matrix $W \in \mathbb{R}^{n \times n}$, the weighted Euclidean norm is denoted by $\norm{x}_W \coloneqq \sqrt{x^\mathsf{T} W x}$. The angle of a vector $x \in \mathbb{R}^2$ is denoted by $\angle \, x$. The two-dimensional rotation matrix $\mc R_\text{2D}$ with angle $\theta_\text{dq} \in [-\pi,\pi)$ and the amplitude-invariant Clarke transformation $\mc T_{\alpha\beta}$ are given by
\begin{align*}
\mc R_\text{2D}(\theta)\coloneqq  \begin{bmatrix} \cos(\theta) & -\sin(\theta) \\ \sin(\theta) & \cos(\theta)\end{bmatrix}\!, \; T_{\alpha\beta} \coloneqq \frac{2}{3} {\begin{bmatrix} 1  &  -\tfrac{1}{2}  &  -\tfrac{1}{2} \\  0 &  \tfrac{\sqrt 3}{2}  & -\frac{\sqrt 3}{2} \end{bmatrix}}\!.
\end{align*}
Finally, we define the $90^\circ$ rotation matrix $j \coloneqq \mc R_\text{2D}(\frac{\pi}{2})$ and the dq-transformation $T_\text{dq}(\theta_\text{dq} ) \coloneqq {\mc R_\text{2D}(-\theta_\text{dq})   T_{\alpha\beta}}$.


\section{Converter Model and Control Objectives}\label{sec:objective}

\subsection{Voltage Source Converter Model}
To illustrate our control, we will use a two-level voltage source converter (VSC) with LCL filter shown in Fig.~\ref{fig:conv:switchlevel}  with modulated voltage $v_\text{sw} \in \mathbb{R}^3$, filter current $i^\text{abc}_\text{f} \in \mathbb{R}^3$, filter voltage $v^\text{abc}_\text{f} \in \mathbb{R}^3$, and grid current  $i^\text{abc}_\text{g} \in \mathbb{R}^3$. {Throughout this manuscript the dc voltage is assumed to be constant. This setup, in abstraction, models a converter connected to energy storage and is widely considered for studies of GFM converters.} We emphasize that the control developed in this work does not require that the filter capacitor $C_\text{f}$ or grid-side inductor $L_\text{g}$ are present. In other words, the control can be applied to VSCs that only feature phase reactors (e.g., modular multi-level converters). In the remainder, we focus on balanced three-phase systems. Extensions to unbalanced three-phase systems are seen as an interesting topic for future work. Given an electrical signal $x^\text{abc} \in \mathbb{R}^3$ we define its representation in $\alpha\beta$-frame and dq-frame with angle $\theta_\text{dq} \in [-\pi,\pi)$ as $x^{\alpha\beta}=T_{\alpha\beta} x^\text{abc}$ and $x^{\text{dq}}=T_\text{dq}(\theta_\text{dq}) x^\text{abc}$. Finally, we define the instantaneous active and reactive power
\begin{align}
P_\text{f} \coloneqq  \tfrac{3}{2}  (v^{\alpha\beta}_\text{f})^\mathsf{T}  i^{\alpha\beta}_\text{f}, \quad Q_\text{f} \coloneqq  \tfrac{3}{2} (v^{\alpha\beta}_\text{f})^\mathsf{T}  j i^{\alpha\beta}_\text{f}.
\end{align}

\subsection{Voltage Source Converter Constraints}
This work focuses on ac current limits and modulation constraints. In particular, we define the current limit $i_{\max} \in \mathbb{R}_{>0}$ and modulation limit $V_{\max} \coloneqq \frac{1}{2} v_\text{dc} \in \mathbb{R}_{>0}$ of a three-phase two-level VSC. In particular, we aim to enforce the long-term current limit $\|i^{\alpha\beta}_\text{f}(t)\| \leq i_{\max}$ and modulation limit $\|v^{\alpha\beta\star}_\text{sw}\| \leq V_{\max}$ at the GFM control level to prevent pulse blocking and clipping of voltage references by the modulator. Typically, brief transient violations (i.e., on the order of tens of milliseconds) of the long-term current limit are permissible up to the short-term current limit  $i^\text{shrt}_{\max} > i_{\max}$, i.e., $\|i^{\alpha\beta}_\text{f}(t)\| \leq i^\text{shrt}_{\max}$.

\subsection{{Control Objectives and Architecture}}
Consider frequency and voltage magnitude references $\omega_\text{dr}\in\mathbb{R}$ and $V_\text{dr}\in\mathbb{R}_{\geq 0}$ obtained from the droop equations
\begin{subequations}
\begin{align}
\omega_\text{dr} &= \omega_0 + m_P (P^\star - P_\text{lp}),\\
V_\text{dr} &= V^\star + m_Q (Q^\star - Q_\text{lp}),
\end{align}
\end{subequations}
with active and reactive power droop coefficients $m_p \in \mathbb{R}_{>0}$ and $m_q \in \mathbb{R}_{>0}$, nominal frequency $\omega_0 \in \mathbb{R}_{>0}$, setpoints $V^\star \in \mathbb{R}_{\geq 0}$, $P^\star \in \mathbb{R}$, $Q^\star \in \mathbb{R}$ for the voltage magnitude, active power, and reactive power, and low-pass filtered active and reactive power measurements $P_\text{lp} \in \mathbb{R}$ and $Q_\text{lp} \in \mathbb{R}$ (e.g., to eliminate unwanted harmonics~\cite{HNL+2015}). 

Next, consider the voltage phase angle $\theta \in \mathbb{R}$ and magnitude $V \in \mathbb{R}_{\geq 0}$  determined by the GFM droop controller
\begin{subequations}\label{eq:droop:cont}
\begin{align}
\ddt \theta &= \omega_0 + m_P (P^\star - P_\text{lp}),\label{eq:droop:cont:freq}\\
\tau_\text{v} \ddt V  &= -V + V^\star + m_Q (Q^\star - Q_\text{lp}),\label{eq:droop:cont:volt}
\end{align}
\end{subequations}
with voltage control time constant $\tau_\text{v} \in \mathbb{R}_{>0}$. The control objective considered in this work is to minimize the deviation from \eqref{eq:droop:cont} when the VSC encounters a constraint. 

To this end, consider a discrete-time controller implementation with sampling time $\tau_\text{ctr} \in \mathbb{R}_{>0}$ evaluated at times $t_k \coloneqq k \tau_\text{ctr}$ for all $k\in \mathbb{N}_0$. At every sampling time $t_k$, the discrete-time implementation
\begin{subequations}\label{eq:droop:disc}
\begin{align}
\hat{\theta}(t_k) &= \theta(t_{k-1}) + \tau_{\text{ctr}} \omega_\text{dr}(t_k),\label{eq:droop:disc:theta}\\
\hat{V}(t_k) &= A_{\tau_\text{v}} V(t_{k-1})  + B_{\tau_\text{v}} V_\text{dr}(t_k).\label{eq:droop:disc:volt}
\end{align}
\end{subequations}
of \eqref{eq:droop:cont} with $A_{\tau} \coloneqq e^{-\frac{\tau_\text{ctr}}{\tau}}$ and $B_{\tau} \coloneqq 1-A_{\tau}$ is used to compute a candidate GFM voltage phase angle $\hat{\theta}(t_k)$ and magnitude $\hat{V}(t_k)$. Next, consider the set $\mc C(t_k)$ that encodes the converter constraints (see Sec.~\ref{sec:feas}). The projection 
\begin{align*}
(\theta(t_k),V(t_k)) = &\argmin_{(\tilde{\theta},\tilde{V}) \in \mc C(t_k)} \; w_\theta (\tilde{\theta} - \hat{\theta}(t_k))^2+ (\tilde{V} \!-\! \hat{V}(t_k))^2
\end{align*}
computes the closest voltage phase angle and magnitude to the candidate GFM voltage phase angle and magnitude that is within the converter constraints. The trade-off between voltage phase angle and magnitude modifications is encoded by the weight $w_\theta \in \mathbb{R}_{>0}$. Notably, if the unconstrained GFM controller \eqref{eq:droop:disc} does not violate the converter limits, then $(\theta(t_k),V(t_k))=(\hat{\theta}(t_k),\hat{V}(t_k))$.

\begin{figure}[b!]
\centering
\includegraphics[width=1\columnwidth]{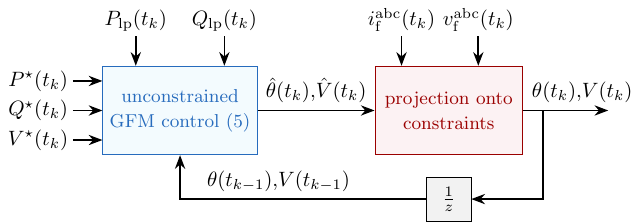}
\caption{Constraint-aware GFM control.\label{fig:proj.arch}}
\end{figure}

The overall control architecture is shown in Fig.~\ref{fig:proj.arch} and explicitly integrates a model of the constraints into the GFM controller instead of relying on inner, parallel, or outer limiters. Notably, the proposed control does not require inner control loops and avoids potential performance degradation due to their bandwidth limits~\cite{GG2023,DPD+2018}. Moreover, constructing the constraint set $\mc C(t_k)$ does not require knowledge of the grid reactance-to-resistance ratio or grid voltage (see Sec.~\ref{sec:feas}). Finally, the constraint-aware GFM control in Fig.~\ref{fig:proj.arch} can systematically account for converter limits beyond current limits (see Sec.~\ref{sec:modlimits} and Sec.~\ref{sec:powerlimit}) and has a straightforward interpretation relative to grid-level specifications (see Sec.~\ref{sec:gfm:disc}).

\subsection{Review of optimal virtual impedance current limiting} 
The objective of minimizing the voltage deviation induced by virtual impedance current limiters (e.g., Fig.~\ref{fig:conv:tvi}) has been considered in the literature. Before reviewing these results, we briefly review the basic threshold virtual impedance current limiter~\cite{HL2011,QWW+2023}. Threshold virtual impedance emulates a converter output impedance with reactance-to-resistance ratio $\rho_\frac{X}{R} \in \mathbb{R}_{\geq0}$ that increases with gain $k_\text{VI}\in\mathbb{R}_{> 0}$ when $\| i^\text{dq}_\text{f}\|$ is larger than the threshold $i_\text{thr} < i_{\max}$. In other words, in dq frame with angle $\theta$, the unconstrained GFM controller \eqref{eq:droop:cont} voltage reference $v^\text{dq}_\text{gfm}=(V,0) \in \mathbb{R}^2$ is modified by
\begin{align}\label{eq:TVI}
v^{\text{dq}}_\text{sw} = v^\text{dq}_\text{gfm} - k_\text{VI} \max(0,\norm{i^\text{dq}}-i_\text{thr}) (I_2 +j \rho_{\frac{X}{R}}) i^\text{dq}_\text{f}.
\end{align}
Notably, for a converter connected to a constant voltage source behind a grid impedance (i.e., Th\'evenin equivalent), the voltage deviation $\| v^\text{dq}_\text{sw}-v^\text{dq}_\text{gfm}\|$ for a given steady-state current magnitude is minimized by setting $\rho_{\frac{X}{R}}$ equal to the grid reactance-to-resistance ratio~\cite[Sec.~V]{GD2019},\cite[Sec.~III]{WWZ2024}. However, the (equivalent) grid reactance-to-resistance ratio is generally unknown. Moreover, we note that, for any fixed current magnitude, the voltage modification in \eqref{eq:TVI} is proportional to the current. As a consequence, challenges arise during scenarios that require rapid phase angle changes~\cite{ZBS+2023} or significant changes to frequency  (see Sec.~\ref{sec:comp}).

\section{The Set of Feasible Converter Voltages}\label{sec:feas}
For the purpose of control design, we map all converter limits to constraints on the voltage reference $v_\text{sw}^{\alpha\beta\star}\in \mathbb{R}^2$ provided to the modulator in Fig.~\ref{fig:conv:switchlevel}. To this end, consider the GFM voltage reference $v^\text{dq}_\text{gfm} \in \mathbb{R}^2$ in a dq frame with angle $\theta_\text{dq} \in [-\pi,\pi)$, frequency $\ddt \theta_\text{dq} = \omega_\text{dq}$, and a voltage $v^{\alpha\beta}_\text{ad} \in \mathbb{R}^2$ in stationary frame provided by auxiliary (e.g., active damping) controls. Then, we parameterize $v_\text{sw}^{\alpha\beta\star}$ as
\begin{align}\label{eq:ctrinput}
v_\text{sw}^{\alpha\beta\star}(t) \coloneqq  \mc R_\text{2D}(\theta_\text{dq}(t))  v^\text{dq}_\text{gfm}(t)  - v_\text{ad}^{\alpha\beta}(t).
\end{align}

\subsection{Modulation Limits}\label{sec:modlimits}
By substituting \eqref{eq:ctrinput} into $\| v_\text{sw}^{\alpha\beta\star}(t_k)\| \leq V_{\max}$ the modulation limit can be equivalently expressed as disc constraint
\begin{align}\label{eq:cons.disc.volt}
\mc C^{\alpha\beta}_{\text{mod}} (t) \!\coloneqq\! \left\{ v^{\alpha\beta}_\text{gfm} \in \R^2 \left\vert \; \|v^{\alpha\beta}_\text{gfm}  -c^{\alpha\beta}_{\text{mod}}(t)\| \leq r_\text{mod}\!\right. \right\}
\end{align}
with $c_\text{mod}(t) \coloneqq v^{\alpha\beta}_\text{ad}(t)$ and  $r_\text{mod} \coloneqq V_{\max}$. In other words, modulation limits directly apply to the GFM voltage reference.

\subsection{Current Limits}
In contrast, the filter current $i^\text{abc}_\text{f}$ is a state of the system and cannot be instantaneously limited by the control. Instead, we map the current limit to the voltage reference provided by the GFM control.
To this end, let $r_\text{f}  \in \mathbb{R}_{>0}$,  $\ell_\text{f}  \in \mathbb{R}_{>0}$, and $Z_\text{f}  \coloneqq r_\text{f}  I_2 + j \omega_\text{dq} \ell_\text{f}$ denote the filter resistance, filter inductance, and filter impedance matrix. The dynamics of the filter current with $v_\text{sw}^{\text{abc}\star}(t)$ given by \eqref{eq:ctrinput} are given by $v_\text{ad}^{\text{dq}} = T_{\text{dq}}(\theta_\text{dq}) v_\text{ad}^{\alpha\beta}$ and
\begin{align}\label{eq:if.cont}
L_\text{f}  \ddt i^\text{dq}_\text{f}(t)  = -Z_\text{f}  i^\text{dq}_\text{f}(t)  + v^\text{dq}_\text{gfm}(t) - v_\text{ad}^{\text{dq}}(t) - v^\text{dq}_\text{f}(t).
\end{align}
Given a time $\tau \in \mathbb{R}_{>0}$, and assuming that $v^\text{dq}_\text{ad}$ and $v^\text{dq}_\text{f}$ are constant from $t$ to $t+\tau$, a forward prediction of the current at time $t+\tau$ computed at time $t$ is given by
\begin{align}\label{eq:if.disc.ctr}
i^{\text{dq}}_{\text{f},\tau} (t) \!\coloneqq\! A_\tau i^\text{dq}_\text{f}(t)  + B_\tau (v^\text{dq}_\text{gfm}(t) - v^\text{dq}_\text{ad}(t) - v^\text{dq}_\text{f}(t))
\end{align}
with $A_\tau \coloneqq e^{-\frac{r_\text{f}}{l_\text{f}}\tau} \mc R_\text{2D}(-\omega_\text{dq} \tau)$ and $B_\tau \coloneqq Z^{-1}_\text{f} (I_2 - A_\tau)$. 
%
%
By substituting \eqref{eq:if.disc.ctr}, {the constraint $\| i^{\text{dq}}_{\text{f},\tau}(t) \| \leq i_{\max}$ on the predicted current can be equivalently expressed in $\alpha\beta$ frame as time-varying disc constraint}
\begin{align}\label{eq:cons.disc}
\mc C^{\alpha\beta}_{\text{cur},\tau}(t) \!\coloneqq\! \left\{ v^{\alpha\beta}_\text{gfm} \in \R^2 \left\vert \; \|v^{\alpha\beta}_\text{gfm}  -c^{\alpha\beta}_{\text{cur},\tau}(t)\| \leq r_{\text{cur},\tau}\!\right. \right\}
\end{align}
{on the GFM voltage reference with center $c_{\text{cur},\tau}(t) \coloneqq v^{\alpha\beta}_\text{f}(t)+v^{\alpha\beta}_\text{ad}(t)-M_\tau i^{\alpha\beta}_\text{f}(t)$, $M_\tau \coloneqq (A^{-1}_\tau  - I_2)^{-1} Z_\text{f}$, and radius}
\begin{align}\label{eq:kappa}
r_{\text{cur},\tau}  \coloneqq i_{\max} \sqrt{\frac{\ell_\text{f}^2 \omega_\text{dq}^2 + r_\text{f}^2}{1+e^{-2\frac{r_\text{f}}{\ell_\text{f}} \tau} - 2 e^{-\frac{r_\text{f}}{\ell_\text{f}} \tau} \cos(\omega_\text{dq} \tau)}}.
\end{align}
{Notably, by construction of $A_\tau$ and $B_\tau$, the constraint  \eqref{eq:cons.disc} assumes a constant frequency $\omega_\text{dq}$ over the prediction horizon $\tau$. This assumption is typically justified for transmission systems with slowly changing frequency. Please see Sec.~\ref{subsec:lowfr} for simulations of a sudden frequency drop.}

\begin{remark}[\textbf{Robustness and inductor nonlinearities}]\label{rem:nonlin}
Because \eqref{eq:if.cont} only needs to be accurate near the maximum converter current, we use the inductance $\ell_\textup{f}$ at the rated inductor current in the predictor \eqref{eq:if.cont}. This value is typically known to within a few percent. To enable prolonged operation at the current limit, we assume the rated current of the inductor to be equal to the long-term current limit $i_{\max}$ of the converter. In addition, $r_\textup{f}$ is interpreted as the equivalent series resistance (ESR) of the converter and filter to model core losses~\cite{MSB+2018} and switching losses~\cite{PCM1980}. In particular, for our hardware experiments we use $r_\textup{f} = P_\textup{loss}/i^2_{\max}$, where $P_\textup{loss}=P_\textup{dc}-P_\textup{f}$ is the overall converter loss obtained from an experiment at rated voltage, maximum current, and unit power factor.
\end{remark} 

\subsection{{Constraints Beyond Modulation and Current Limits}}\label{sec:powerlimit}
{While this work focuses on modulation and current limits, a wide range of limits can be modeled as constraints on the VSC voltage. For example, active and reactive power limits $P_{\min},P_{\max} \in \mathbb{R}$ and $Q_{\min},Q_{\max} \in \mathbb{R}$ can be modeled as
\begin{subequations}\label{eq:PQcons}
	\begin{align}
	\! \! \mc C^{\alpha\beta}_{P_\text{sw}} (t) \!&\coloneqq\! \left\{ v^{\alpha\beta}_\text{gfm} \!\in \!\R^2 \left\vert P_{\min}\!\leq \tfrac{3}{2} (i^{\alpha\beta}_\text{f})^\mathsf{T} v^{\alpha\beta}_\text{gfm}  \leq \! P_{\max} \! \right. \right\}\!,\\
	\! \!  \mc C^{\alpha\beta}_{Q_\text{sw}} (t) \!&\coloneqq\! \left\{ v^{\alpha\beta}_\text{gfm} \!\in \!\R^2 \left\vert Q_{\min}\! \leq \tfrac{3}{2}(j i^{\alpha\beta}_\text{f})^\mathsf{T} v^{\alpha\beta}_\text{gfm} \leq \! Q_{\max} \! \right. \right\}\!.
	\end{align}
\end{subequations} 
Moreover, dc voltage limits can be mapped to the GFM voltage using a prediction model of the dc capacitor dynamics. Exploring constraints beyond modulation and current limits is seen as an interesting topic for future work.}

\subsection{The Set of Feasible GFM Voltages}\label{subsec:feasibleset}
While the GFM control should satisfy the current and voltage limits at every time $t_k$, a significant question is for what prediction horizons $\tau \in \mathbb{R}_{>0}$ to enforce the constraint \eqref{eq:cons.disc}. A common approximation is to only enforce the constraint for the predicted current at the next sampling time, i.e., $v^{\alpha\beta}_\text{gfm}(t_k) \in \mc C^{\alpha\beta}_{\text{cur},\tau_\text{ctr}}(t_k)$, i.e., to enforce operating within the current limits at all controller sampling times.

However, GFM standards and grid codes currently under discussion are expected to contain the additional requirement to reach a steady-state voltage within one cycle after an event (e.g., short-circuit fault). To meet this requirement, a natural choice may be to require $v^{\alpha\beta}_\text{gfm}(t_k) \in \mc C^{\alpha\beta}_{\text{cur},\tau}(t_k)$ for all $\tau \in [0,\frac{1}{f_0}]$, where $f_0 \in \mathbb{R}_{>0}$ is the nominal frequency of the ac system. However, this constraint is infinite dimensional and not tractable in the context of real-time control of grid-connected converters. Instead, in the remainder of this paper, we impose the constraint $v^{\alpha\beta}_\text{gfm}(t_k) \in \mc C^{\alpha\beta}_{\text{cur},\tau_\text{cyc}}(t_k) \cap \mc C^{\alpha\beta}_{\text{cur},\tau_\text{cyc}}(t)$, with $\tau_\text{cyc} \approx \frac{1}{f_0}$. Broadly speaking, this constraint requires that a GFM voltage with phase angle $\angle\, v^{\alpha\beta}_\text{gfm}(t_k)$, magnitude $\| v^{\alpha\beta}_\text{gfm}(t_k)\|$, and constant frequency $\omega_\text{dq}$ is predicted to meet the current limit one time step and one cycle ahead, if all other quantities (i.e., $v^{\alpha\beta}_\text{f}$ and $v^{\alpha\beta}_\text{ad}$) are in a steady-state with frequency $\omega_\text{dq}$.

During transients the frequency of $v^{\alpha\beta}_\text{f}$ and $v^{\alpha\beta}_\text{ad}$  cannot be expected to match the frequency $\omega_\text{dq}$. However, obtaining more accurate predictions of future currents and GFM voltage frequencies would require full knowledge of the entire grid model and state that are not available to the converter control.

Overall, we consider the constraint 
\begin{align}\label{eq:consset}
\mc C^{\alpha\beta} \coloneqq \mathcal{C}^{\alpha\beta}_\text{mod} \cap \mathcal{C}^{\alpha\beta}_{\text{cur},\tau_\text{ctr}} \cap \mathcal{C}^{\alpha\beta}_{\text{mod},\tau_\text{cyc}}.
\end{align}
The constraint set for a converter riding through a short-circuit fault (captured at $t=0.45$~s of the simulation in Sec.~\ref{sec:scib}) is shown in Fig.~\ref{fig:feas}.
\begin{figure}[b!]
\centering
\includegraphics[width=0.9\columnwidth]{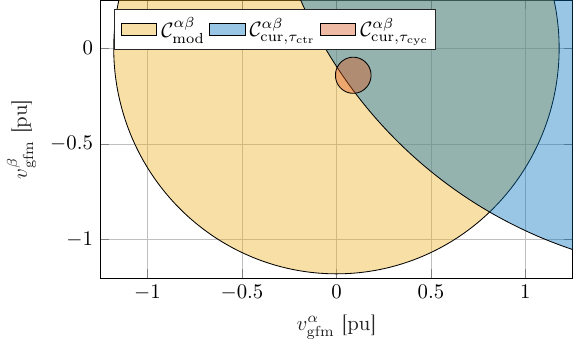}
\caption{The set of feasible voltages during a symmetric short-circuit fault.\label{fig:feas}}
\end{figure} 
It can be seen that, for a significant share of the voltages $v^{\alpha\beta}_\text{gfm} \in \mc C^{\alpha\beta}_{\text{cur},\tau_\text{ctr}}$,  that are predicted to satisfy the current limit at the next controller sampling time, no steady-state voltage with frequency $\omega_\text{dq}$ exists within the current limits (i.e., $v^{\alpha\beta}_\text{gfm} \notin \mc C^{\alpha\beta}_{\text{cur},\tau_\text{cyc}}$).

The following proposition provides a condition under which $\mc C^{\alpha\beta}$ is non-empty, i.e., a feasible voltage $v^{\alpha\beta}_\text{f}$ exists.

\begin{proposition}[\textbf{Non-empty feasible set}]\label{prop:nonempty}
Assume that $\| i^{\alpha\beta}_\textup{f}(t_k) \| \leq i_{\max}$ and $v^{\alpha\beta}_\textup{f}(t_k) + v^{\alpha\beta}_\textup{ad}(t_k) \in \mc C^{\alpha\beta}_\textup{mod}(t_k)$. Then, $\mc C^{\alpha\beta}(t_k)$ is non-empty.
\end{proposition}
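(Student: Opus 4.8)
The plan is to exhibit an explicit feasible point. A natural candidate is the GFM reference that, together with the auxiliary control, exactly reproduces the present filter voltage, i.e., $v^{\alpha\beta}_\text{gfm} \coloneqq v^{\alpha\beta}_\text{f}(t_k) + v^{\alpha\beta}_\text{ad}(t_k)$. I would then check that this vector lies in each of the three discs whose intersection defines $\mc C^{\alpha\beta}(t_k)$. Membership in $\mc C^{\alpha\beta}_\text{mod}(t_k)$ is immediate: by \eqref{eq:cons.disc.volt} this disc has center $v^{\alpha\beta}_\text{ad}(t_k)$ and radius $V_{\max}$, so $v^{\alpha\beta}_\text{gfm}\in\mc C^{\alpha\beta}_\text{mod}(t_k)$ is precisely the second hypothesis $v^{\alpha\beta}_\text{f}(t_k)+v^{\alpha\beta}_\text{ad}(t_k)\in\mc C^{\alpha\beta}_\text{mod}(t_k)$.

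For membership in $\mc C^{\alpha\beta}_{\text{cur},\tau}(t_k)$ with $\tau\in\{\tau_\text{ctr},\tau_\text{cyc}\}$ I would invoke the equivalence established in the derivation of \eqref{eq:cons.disc}: $v^{\alpha\beta}_\text{gfm}\in\mc C^{\alpha\beta}_{\text{cur},\tau}(t_k)$ holds if and only if the predicted current $i^{\text{dq}}_{\text{f},\tau}(t_k)$ of \eqref{eq:if.disc.ctr}, evaluated with this reference, satisfies $\|i^{\text{dq}}_{\text{f},\tau}(t_k)\|\le i_{\max}$. The dq image of the candidate is $v^{\text{dq}}_\text{gfm}(t_k)=v^{\text{dq}}_\text{f}(t_k)+v^{\text{dq}}_\text{ad}(t_k)$, which makes the forcing term $B_\tau(v^{\text{dq}}_\text{gfm}-v^{\text{dq}}_\text{ad}-v^{\text{dq}}_\text{f})$ in \eqref{eq:if.disc.ctr} vanish and leaves $i^{\text{dq}}_{\text{f},\tau}(t_k)=A_\tau i^{\text{dq}}_\text{f}(t_k)$. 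Since $A_\tau=e^{-\frac{r_\text{f}}{\ell_\text{f}}\tau}\mc R_\text{2D}(-\omega_\text{dq}\tau)$ is a scaled rotation and the dq transformation is norm-preserving, $\|i^{\text{dq}}_{\text{f},\tau}(t_k)\|=e^{-\frac{r_\text{f}}{\ell_\text{f}}\tau}\|i^{\alpha\beta}_\text{f}(t_k)\|\le e^{-\frac{r_\text{f}}{\ell_\text{f}}\tau}i_{\max}\le i_{\max}$, using the first hypothesis and $r_\text{f},\ell_\text{f},\tau>0$. Hence the candidate lies in both current discs, and combining the three memberships gives $v^{\alpha\beta}_\text{gfm}\in\mc C^{\alpha\beta}(t_k)$, so the set is non-empty.

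Equivalently, one can verify the current-disc memberships by a direct computation: the candidate's displacement from the center $c_{\text{cur},\tau}(t_k)=v^{\alpha\beta}_\text{f}(t_k)+v^{\alpha\beta}_\text{ad}(t_k)-M_\tau i^{\alpha\beta}_\text{f}(t_k)$ is exactly $M_\tau i^{\alpha\beta}_\text{f}(t_k)$; identifying the $2\times2$ matrices of the form $aI_2+bj$ with complex numbers shows $M_\tau=A_\tau B_\tau^{-1}$ acts as multiplication by a scalar of modulus $e^{-\frac{r_\text{f}}{\ell_\text{f}}\tau}r_{\text{cur},\tau}/i_{\max}$, with $r_{\text{cur},\tau}$ as in \eqref{eq:kappa}, so $\|M_\tau i^{\alpha\beta}_\text{f}(t_k)\|\le e^{-\frac{r_\text{f}}{\ell_\text{f}}\tau}r_{\text{cur},\tau}\le r_{\text{cur},\tau}$.

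There is essentially no hard step here; the crux is simply the right choice of feasible point — the present filter voltage, offset by the auxiliary term — together with the observation that this choice zeroes the prediction-model forcing term. Conceptually, a converter already within its current limit can, for one more sampling step and one cycle ahead, ``hold'' its filter voltage and stay within limits. The only minor technical point to record is that the $\alpha\beta$- and dq-transformations are norm-preserving, so that the hypothesis $\|i^{\alpha\beta}_\text{f}(t_k)\|\le i_{\max}$ carries over to the dq-frame prediction.
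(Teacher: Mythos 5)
Your proposal is correct and follows essentially the same route as the paper's proof: it exhibits the same feasible point $v^{\alpha\beta}_\text{gfm}=v^{\alpha\beta}_\text{f}(t_k)+v^{\alpha\beta}_\text{ad}(t_k)$, notes that the forcing term in \eqref{eq:if.disc.ctr} vanishes so that $i^{\text{dq}}_{\text{f},\tau}(t_k)=A_\tau i^{\text{dq}}_\text{f}(t_k)$, and uses the contraction $e^{-\frac{r_\text{f}}{\ell_\text{f}}\tau}<1$ together with the modulation hypothesis. Your alternative verification via the disc center and the modulus of $M_\tau$ is a correct (and consistent) restatement of the same fact in the $\alpha\beta$ representation.
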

\begin{proof}
Leting $v^{\alpha\beta}_\text{gfm}(t_k) = v^{\alpha\beta}_\textup{f}(t_k) + v^{\alpha\beta}_\textup{ad}(t_k)$, the prediction \eqref{eq:if.disc.ctr} simplifies to $i^{\text{dq}}_{\text{f},\tau}(t_k) = A_\tau i^\text{dq}_\text{f}(t_k)$ for all $\tau \in \mathbb{R}_{>0}$. Moreover, by definition of $A_\tau$, it holds that $\| i^{\text{dq}}_{\text{f},\tau} (t_k) \| \leq e^{-\frac{r_\text{f}}{l_\text{f}} \tau} \| i^\text{dq}_\text{f} (t_k)\|$ holds for all $\tau \in \mathbb{R}_{>0}$. Under the conditions of the proposition, $\| i^\text{dq}_\text{f} (t_k)\| = \| i^{\alpha\beta}_\text{f} (t_k)\| \leq i_{\max}$. This implies $v^{\alpha\beta}_\textup{f}(t_k) + v^{\alpha\beta}_\textup{ad}(t_k) \in \mathcal{C}^{\alpha\beta}_{\text{mod},\tau}(t_k)$ for any $\tau \in \mathbb{R}_{>0}$ and the proposition immediately follows from $v^{\alpha\beta}_\textup{f}(t_k) + v^{\alpha\beta}_\textup{ad}(t_k) \in \mc C^{\alpha\beta}_\textup{mod}(t_k)$.
\end{proof}
{Thus,} as long as the {filter current is within its limit and} the sum of the filter voltage and active damping control is within the modulation limit $\mc C^{\alpha\beta}_{\text{mod}}(t_k)$, {there always exists} a voltage $v^{\alpha\beta}_\textup{f}(t_k)$ within the modulation limits such that the predicted current is within {its} limit. Notably, this result also implies feasibility at the next controller sampling time unless $v^{\alpha\beta}_\textup{f}(t_{k+1}) + v^{\alpha\beta}_\textup{ad}(t_{k+1})$ exceeds the modulation limit. 

Proposition~\ref{prop:nonempty} also highlights that continuous operation within the constraints is generally not possible during grid or load induced overvoltage that exceeds the converter modulation limits as the converter can no longer fully control its output current. More generally, if the set $\mc C^{\alpha\beta}(t_k)$ is empty, no voltage reference $v^{\alpha\beta}_{\text{sw}}$ with constant frequency over $\tau_\text{cyc}$ and magnitude within the modulation limits exist that does not result in violation of current limits. This is true  independently of the specific control strategy. In this case, it appears reasonable to employ pulse blocking for brief periods of time (e.g., when brief transient overvoltages exceed the converter modulation limit) until $\mc C^{\alpha\beta}(t_k)$ is non-empty. If $\mc C^{\alpha\beta}(t_k)$ is empty for an extended period of time to be defined, prolonged overvoltage or overcurrent usually require disconnecting the converter. In other words, the proposed framework can be used to systematically detect scenarios in which to allow grid-connected converters to momentarily cease operation or trip.

\begin{remark}[\textbf{Low switching and sampling frequencies}]\label{rem:switching}
The low switching frequency of converters in high-power applications significantly reduces control bandwidth~\cite{GG2023} and degrades the performance of common current-limiting methods for GFM controls~\cite{DPD+2018}. To improve control performance, predictive delay compensators~\cite{LWL+2018} may be used. Using the predictor \eqref{eq:if.cont} embedded into the constraint set \eqref{eq:consset} to compensate for increased latency is seen as an interesting topic for future work. While the accuracy of \eqref{eq:if.cont} decreases with increasing $\tau$, lowering the switching frequency and sampling rate is not expected to significantly degrade the accuracy of \eqref{eq:if.cont}. In particular, $\frac{1}{f_0} = \tau_\text{\upshape{cyc}} \gg \tau_\text{\upshape{ctr}} = \frac{1}{f_\text{\upshape{sw}}}$ when $f_\text{\upshape{sw}} \gg f_0$, i.e., the length of the prediction horizon is not dictated by the switching frequency but by grid-level requirements (see Sec.~\ref{sec:gfm:disc}).
\end{remark} 

\section{Projection onto the Feasible Set}\label{sec:proj}
Next, given an infeasible candidate voltage provided by an unconstrained GFM control, we consider the problem of finding the feasible voltage with minimal distance to the infeasible candidate voltage. We consider two different coordinate frames (polar and Cartesian) that result in different notions of distance between two voltages. While the projection in polar coordinates is more closely aligned with control objectives in power systems (see Sec.~\ref{sec:gfm:disc}), the projection in Cartesian coordinates is tractable for real-time control. For brevity of the notation the time index $t_k$ is omitted in this section.

\subsection{Polar Coordinates}
We first consider a candidate voltage in polar coordinates with phase angle $\hat{\theta} \in [-\pi,\pi)$, magnitude $\hat{V} \in \mathbb{R}_{\geq 0}$, and $v^\text{dq}_{\text{gfm}}=(\hat{V},0) \in \mathbb{R}^2$. The trade-off between modifying the voltage phase angle and the voltage magnitude is parameterized by $w_\theta \in \mathbb{R}_{\geq 0}$. The oblique projection of the pair $(\hat{\theta},\hat{V})$ onto the feasible set $\mc C^{\alpha\beta}$ in polar coordinates is given by
\begin{subequations}\label{eq:proj:polar} 
\begin{align}
\Pi^{\text{pol}}_{\mc C^{\alpha\beta}} (\hat{\theta},\hat{V}) \coloneqq &\argmin_{\theta,V} \; w_\theta (\theta - \hat{\theta})^2+ (V - \hat{V})^2 \label{eq:proj:polar:cost}\\
&\; \text{s.t. }  \quad \mc R_{\text{2D}}(\theta) \begin{bmatrix} V \\ 0 \end{bmatrix} \in \mc C^{\alpha\beta} \label{eq:proj:polar:cons}
\end{align}
\end{subequations}
In other words, increasing $w_\theta \in \mathbb{R}_{> 0}$ results in giving priority to reducing phase angle modifications. Mapping the constraint set shown in Fig.~\ref{fig:feas} to polar coordinates results in the non-convex constraints shown in Fig.~\ref{fig:Cpolar}. The projection $\Pi^{\text{pol}}_{\mc C^{\alpha\beta}} (\hat{\theta},\hat{V})$ of the (unconstrained) reference voltage $(\hat{\theta},\hat{V})$ onto the constraint set is illustrated in Fig.~\ref{fig:Cpolar}. Notably, the projection \eqref{eq:proj:polar} is a nonlinear non-convex program that may be very challenging to solve in real time. 
\begin{figure}[b]
\centering
\includegraphics[width=0.95\columnwidth]{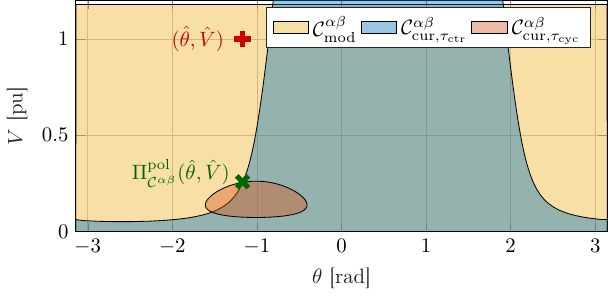}
\caption{Set of feasible voltages during a short-circuit fault: reference voltage provided by GFM droop control (\textcolor{red!80!black}{$\Plus$}) and its projection onto the set of feasible voltages (\textcolor{green!40!black}{$\Xmark$}). \label{fig:Cpolar}}
\end{figure}

\subsection{Cartesian Coordinates}
To approximate \eqref{eq:proj:polar} in Cartesian coordinates, we first note that, for $v^{\alpha\beta}_\text{gfm} = \mc R(\hat{\theta}) v^\text{dq}_\text{gfm}$ and any $v^\text{dq}_\text{gfm} \in \mathbb{R}^2$, it holds that $\angle\, v^{\alpha\beta}_\text{gfm} = \hat{\theta} + \angle\, v^\text{dq}_\text{gfm}$. Using $\theta = \angle\, v^{\alpha\beta}_\text{gfm}$, it holds that
\begin{align}\label{eq:cost_cart}
\!\!w_\theta (\theta - \hat{\theta})^2+ (V - \hat{V})^2 \!=\! w_\theta (\angle v^\text{dq}_\text{gfm})^2 + (\| v^\text{dq}_\text{gfm}  \| - \hat{V} )^2
\end{align}
Moreover, the second-order Taylor series expansion of \eqref{eq:cost_cart} at $\hat{v}^\text{dq}_\text{gfm} = (\hat{V},0)$ is given by $\| v^\text{dq}_\text{gfm} - \hat{v}^\text{dq}_\text{gfm}\|_{W}^2$ with $W \coloneqq \diag\{1,w_\theta / \hat{V}^2\} \in \mathbb{R}^{2 \times 2}$. Next, we transform the constraints to a dq frame with angle $\hat{\theta} \in [-\pi,\pi)$, i.e., $c^\text{dq}_\text{mod}(\hat{\theta}) \coloneqq \mc R_\text{2D} (-\hat{\theta}) c^{\alpha\beta}_\text{mod}$ and $c^\text{dq}_{\text{cur},\tau}(\hat{\theta}) \coloneqq \mc R_\text{2D} (-\hat{\theta}) c^{\alpha\beta}_{\text{cur},\tau}$ as well as $\mc C^\text{dq}_\text{mod}(\hat{\theta}) \coloneqq  \{ v^\text{dq}_\text{gfm} \in \R^2 \vert \; \|v^\text{dq}_\text{gfm}  -c^\text{dq}_\text{mod}(\hat{\theta})\| \leq r_\text{mod}  \}$ and $\mc C^\text{dq}_{\text{cur},\tau}(\hat{\theta}) \coloneqq  \{ v^\text{dq}_\text{gfm} \in \R^2 \vert \; \|v^\text{dq}_\text{gfm}  -c^\text{dq}_{\text{cur},\tau}(\hat{\theta})\| \leq r_{\text{cur},\tau}  \}$. Letting $C^\text{dq}(\hat{\theta}) \!\coloneqq\! C^\text{dq}_\text{mod}(\hat{\theta}) \cap C^\text{dq}_{\text{cur},\tau_\text{ctr}}(\hat{\theta}) \cap C^\text{dq}_{\text{cur},\tau_\text{cyc}}(\hat{\theta})$, we obtain
%
\begin{align}\label{eq:proj:cart}
\Pi^{\text{dq}}_{\mc C^\text{dq}(\hat{\theta})} (\hat{v}^\text{dq}_\text{gfm}) \coloneqq &\argmin_{v^\text{dq}_\text{gfm} \in \mc C^\text{dq}(\hat{\theta})} \; (v^\text{d}_\text{gfm}-\hat{V})^2+\frac{w_\theta}{\hat{V}^2}(v^\text{q}_\text{gfm})^2
\end{align}
Notably, in Cartesian coordinates, the feasible set is represented by an intersection of disks and, hence, convex. An illustration of the constraint sets, dq-frame with angle $\hat{\theta}$, GFM reference voltage $\hat{v}^{\alpha\beta}_\text{gfm} = \mc R(\hat{\theta}) \hat{v}^\text{dq}_\text{gfm}$, and projected voltage $v^{\alpha\beta}_\text{gfm} = \mc R(\hat{\theta}) \Pi^{\text{dq}}_{\mc C^\text{dq}(\hat{\theta})} (\hat{v}^\text{dq}_\text{gfm})$ is shown in Fig.~\ref{fig:Calphabeta}. 
Note that changes to $\hat{v}^\text{d}_\text{gfm}$ modify the voltage magnitude, while small changes to $\hat{v}^\text{q}_\text{gfm}$ modify the voltage phase angle. Crucially, \eqref{eq:proj:cart} is a quadratically constrained quadratic program (QCQP) that can be solved efficiently, as shown next.
\begin{figure}[b!]
\centering
\includegraphics[width=0.9\columnwidth]{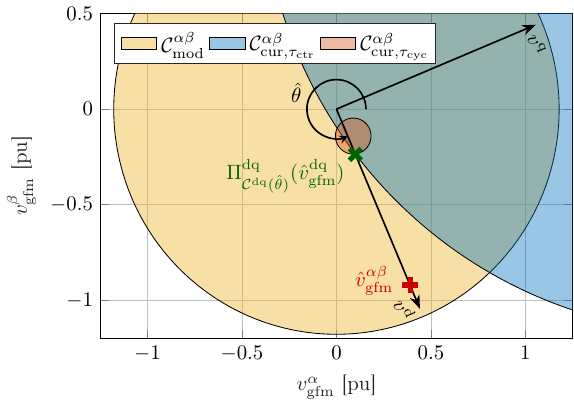}
\caption{Set of feasible voltages during a short-circuit fault: reference voltage $\hat{v}$ provided by GFM droop control (\textcolor{red!80!black}{$\Plus$}) in dq-frame with angle $\hat{\theta}$ and its projection onto the set of feasible voltages (\textcolor{green!40!black}{$\Xmark$}).\label{fig:Calphabeta}}
\end{figure}

\subsection{Efficient Projection in Cartesian Coordinates}\label{sec:splitting}
To make the control developed in the previous section practical, the projection operation needs to be solved on an embedded controller at high sampling rates. To this end, we next present an Alternating Direction Method of Multipliers (ADMM)~\cite[Sec.~3.1]{MAL-016} algorithm that is inspired by~\cite{osqp} and tailored to the projection \eqref{eq:proj:cart}. We begin by using $\omega_\text{dq}=\omega_0$ when computing the constraints \eqref{eq:cons.disc}. {This assumption enables offline computation of the matrices $M_\tau$ that define $\mc C^\text{dq}_{\text{cur},\tau}$ and is typically justified for the small frequency deviations observed in transmission systems. The impact of grid frequency deviations is illustrated in Sec.~\ref{subsec:lowfr}.} For brevity of the notation, let $C^\text{dq}_1 \coloneqq C^\text{dq}_{\text{cur},\tau_\text{ctr}}$, $C^\text{dq}_2 \coloneqq  \mc C^\text{dq}_{\text{cur},\tau_\text{cyc}}$, and $C^\text{dq}_1 \coloneqq \mc C^\text{dq}_\text{mod}$. Next, we introduce splitting variables $z=(z_1,z_2,z_3)$ with $z_n \in \mathbb{R}^2$ for $n \in \{1,2,3\}$ and rewrite \eqref{eq:proj:cart} as
\begin{subequations}\label{eq:proj:splitting}
  \begin{align}
&\min_{v^\text{dq}_\text{gfm},z_1,z_2,z_3} \; &\tfrac{1}{2}(v^\text{dq}_\text{gfm}&)^\mathsf{T} W v^\text{dq}_\text{gfm} + q^\mathsf{T} v^\text{dq}_\text{gfm}  \phantom{aaaa}\\
&\quad \text{s.t. }& v^\text{dq}_\text{gfm} &= z_n \qquad\quad\;\, \forall n \in \{1,2,3\} \label{eq:proj:splitting:splitting}\\ 
&\phantom{\quad \text{s.t. }}& z_n &\in \mc C^\text{dq}_n(\hat{\theta}) \qquad \forall n \in \{1,2,3\}
\end{align}
\end{subequations}
with $q=-W \hat{v}^\text{dq}_\text{gfm}$. Moreover, we introduce the penalty parameter $\rho \in \mathbb{R}_{>0}$ and (scaled) dual multipliers $y = (y_1,y_2,y_3)$ and $y_n \in \mathbb{R}^2$, $n \in \{1,2,3\}$ for the constraint \eqref{eq:proj:splitting:splitting}. Then, the augmented Lagrangian of \eqref{eq:proj:splitting} in scaled-dual form~\cite[Sec.~3.1.1]{MAL-016} is given by 
$\mathcal{L}_s = \tfrac{1}{2}(v^\text{dq}_\text{gfm})^\mathsf{T} W v^\text{dq}_\text{gfm} + q^\mathsf{T} v^\text{dq}_\text{gfm} + \sum_{n=1}^{3}  \frac{\rho}{2} \| x - z_n + y_n \|^2 + \mc I_{\mc C^\text{dq}_n(\hat{\theta})}(z_n)$ and the indicator function $\mc I_{\mc C^\text{dq}_n(\hat{\theta})}(z_n)$ (i.e., $\mc I_{\mc C^\text{dq}_n(\hat{\theta})}(z_n)=0$ if $z_n \in \mc C^\text{dq}_n(\hat{\theta})$ and $\mc I_{\mc C^\text{dq}_n(\hat{\theta})}(z_n)=\infty$ if $z_n \notin \mc C^\text{dq}_n(\hat{\theta})$). Moreover, the Euclidean projection of $\xi \in \mathbb{R}^2$ onto the disk $\mc C^\text{dq}_n(\hat{\theta})$ is denoted by
\begin{align}\label{eq:circlim}
\Pi_{\mc C^\text{dq}_n (\hat{\theta})}(\xi) \coloneqq \begin{cases}
\xi &\text{if } \xi \in \mc C^\text{dq}_n\\
\frac{\xi}{\|\xi\|}r_n + c^\text{dq}_n &\text{if } \xi \notin \mc C^\text{dq}_n.\\
\end{cases}
\end{align}
This corresponds to the well-known circular limiter, i.e., finding the nearest point to $\xi$ in the boundary of $\mc C^\text{dq}_n(\hat{\theta})$. Next, we recall $W=\diag\{1,w_\theta / \hat{V}^2\}$ and define $W_\rho \coloneqq W + 3\rho I_2$. Applying ADMM~\cite[Sec.~3.1]{MAL-016} with acceleration~\cite[Sec.~3.4.3]{MAL-016} to $\mathcal{L}_s$ results in the algorithm 
\begin{subequations}\label{eq:ADMM}
  \begin{align}
 v^\text{dq}_\text{gfm}(l\!+\!1) &\coloneqq W^{-1}_\rho \big(W \hat{v}^\text{dq}_\text{gfm} \!+\! \rho  \sum\nolimits_{n=1}^3  (z_n(l)\!-\!y_n(l))\big)\\
\!\tilde{v}^\text{dq}_\text{gfm}(l\!+\!1)  &\coloneqq v^\text{dq}_\text{gfm}(l\!+\!1) \!+\! (\alpha\!-\!1) (v^\text{dq}_\text{gfm}(l\!+\!1) \!-\! v^\text{dq}_\text{gfm}(l))\!\label{eq:ADMM:derivative}\\
z_n(l\!+\!1) &\coloneqq \Pi_{\mc C^\text{dq}_n(\hat{\theta})} \big(\tilde{v}^\text{dq}_\text{gfm}(l\!+\!1)+y_n(l)\big)\\
y_n(l\!+\!1) &\coloneqq y_n(l) + \tilde{v}^\text{dq}_\text{gfm}(l\!+\!1) - z_n(l+1)
\end{align}
\end{subequations}
with iteration index $l \in \mathbb{N}_0$ and $\alpha \in [1,2]$ typically selected as $\alpha \in [1.5,1.8]$ (see~\cite[Sec.~3.4.3]{MAL-016}). At every time step $k$, we initialize $z_n(0)=\hat{v}^\text{dq}_\text{gfm}(t_k)$ and $y_n(0)=\mathbbl{0}_2$ for all $n \in \{1,2,3\}$ and perform $n_\text{it} \in \mathbb{N}$ iterations of \eqref{eq:ADMM}. A block diagram of algorithm \eqref{eq:ADMM} is shown in Fig.~\ref{fig:ADMM}.

\begin{figure}[h!]
\centering
\includegraphics[width=1\columnwidth]{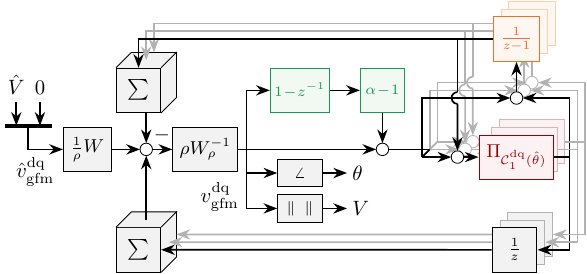}
\caption{ADMM algorithm with derivative for acceleration (green), integrals of the difference between the voltage $\tilde{v}^\text{dq}_\text{gfm}(l\!+\!1)$ and its limited version $z_n(l\!+\!1)$ (orange), and circular voltage limiter $\Pi_{\mc C^\text{dq}_n(\hat{\theta})}$ (red),  \label{fig:ADMM}}
\end{figure}

\subsection{Discussion and Interpretation}
At every iteration the algorithm \eqref{eq:ADMM} shown in Fig.~\ref{fig:ADMM} computes a weighted average of previous iterates and the voltage $\hat{v}^\text{dq}_\text{gfm}$ to obtain the voltage $v^\text{dq}_\text{gfm}$. In particular, if $\rho=1$, $w_\theta=\hat{V}^2$, and $y_n(l)=\mathbbl{0}_2$, then $v^\text{dq}_\text{gfm}(l\!+\!1)  = \frac{1}{4} (\hat{v}^\text{dq}_\text{gfm} + \sum_{n=1}^3 z_n)$. Next, the discrete-time proportional-derivative filter \eqref{eq:ADMM:derivative} is applied to accelerate the algorithm (i.e., take larger steps towards the optimizer). The resulting voltage $\tilde{v}^\text{dq}_\text{gfm}$ is passed through the circular limiters \eqref{eq:circlim} for each of the three constraints. Finally, the difference between $\tilde{v}^\text{dq}_\text{gfm}$ and the output of the circular limiters is integrated to obtain the dual multipliers $y_n$ that are fed back to decrease the difference between the voltage $\tilde{v}^\text{dq}_\text{gfm}(l+1)$ and its limited version $z_n(l+1)$ in the next iteration. A key advantage of this algorithm is that, besides the circular limiter, it only consists of few and simple operations. Notably, ADMM algorithms often exhibit rapid convergence to medium-accuracy solutions~\cite{osqp}. For the application at hand close approximations of the optimizer can be obtained in five to ten iterations (see Sec.~\ref{sec:scib}).

\section{Constraint-Aware GFM Droop Control}\label{sec:gfm} 

The discrete time constraint-aware GFM droop control with sampling time $\tau_\text{ctr} \in \mathbb{R}_{>0}$ and time $t_k \coloneqq k \tau_\text{ctr}$ for all $k\in \mathbb{N}_0$
is depicted in Fig.~\ref{fig:proj.gfm}. It consists of droop control with active and reactive power droop coefficients $m_p \in \mathbb{R}_{>0}$ and $m_q \in \mathbb{R}_{>0}$ that use low-pass filtered power measurements $P_\text{lp} \in \mathbb{R}$ and $Q_\text{lp} \in \mathbb{R}$ to compute a candidate GFM ac voltage with phase angle $\hat{\theta} \in [-\pi,\pi)$ and magnitude $\hat{V} \in \mathbb{R}_{>0}$. Subsequently, if the candidate ac voltage would result in a violation of current or modulation constraints, the controller computes an ac voltage with phase angle $\theta \in [-\pi,\pi)$ and magnitude $V \in \mathbb{R}_{>0}$ that is feasible for the constraints and minimizes the deviation from the candidate ac voltage $\hat{\theta} \in [-\pi,\pi)$ and magnitude $\hat{V} \in \mathbb{R}_{>0}$. We note that, for feasible candidate voltages (i.e., $\theta=\hat{\theta}$ and $V=\hat{V}$) this control reduces to standard GFM droop control with a low-pass filtered voltage magnitude dynamics. Finally, active $RC$ damping is used to suppress $LCL$ filter resonance~\cite{WBL+2015}.
\begin{figure}[b!]
\centering
\includegraphics[width=1\columnwidth]{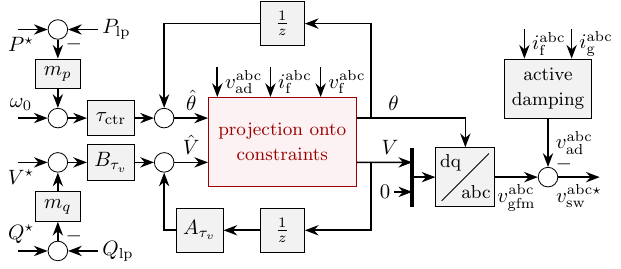}
\caption{The proposed constraint-aware GFM control leverages GFM droop control to compute the unconstrained candidate voltage $\hat{\theta}$, $\hat{V}$ that is limited to the set of voltages that do not result in violations of current or modulation limits to obtain the (constrained) GFM voltage reference (i.e., $\theta$ and $V$). Active damping is used to suppress $LCL$ filter resonance.\label{fig:proj.gfm}}
\end{figure}

\subsection{Active Damping of $LCL$ Filter Resonance}\label{sec:gfm:ad}
Beyond its fast acting current limiter, a benefit of the dual-loop architecture (see Fig.~\ref{fig:conv:dualloop}) is that the inner controls implicitly dampen $LCL$ filter resonance. At the same time, the tuning of dual-loop architecture can be challenging because it needs to account for $LCL$ filter resonance, current control and limiting, voltage control, and related aspects. In contrast, the constraint-aware GFM control shown in Fig.~\ref{fig:proj.gfm} uses separate dedicated controls for  damping of filter resonance and current limiting. In the context of current limiting and fault ride through, the resonance of the capacitor voltage upon fault inception, or more generally, when current limits are reached is a significant concern. Thus, for brevity of the presentation we focus on virtual $RC$ damping $v^\text{abc}_\text{ad}(s)=k_\text{rc} \frac{s}{s+\omega_\text{rc}} (i^\text{abc}_\text{g} - i^\text{abc}_\text{f})$ with gain $k_\text{rc}$ and cutoff frequency $\omega_\text{rc}$. The reader is referred to~\cite{WBL+2015} for an in-depth discussion of virtual $RC$ damping. 
We emphasize that similar methods for reshaping the converter output impedance, such as virtual impedance to mitigate subsynchronous oscillations or harmonic instability~\cite{WLB+2015}, can be readily combined with the proposed control.

\subsection{Constraint-Aware GFM Droop Control}
Given a time constant $\tau \in \mathbb{R}_{>0}$ we define $A_{\tau} \coloneqq e^{-\frac{\tau_\text{ctr}}{\tau}}$, $B_{\tau} \coloneqq 1-A_{\tau}$, and the first-order low-pass filter 
\begin{subequations}
\begin{align}
 P_\text{lp}(t_k) &= A_{\tau_\text{lp}} P_\text{lp}(t_{k-1}) + B_{\tau_\text{lp}}  P_\text{f}(t_k),\\
 Q_\text{lp}(t_k) &= A_{\tau_\text{lp}}  Q_\text{lp}(t_{k-1}) + B_{\tau_\text{lp}}  Q_\text{f}(t_k)
\end{align}
\end{subequations}
with time constant $\tau_\text{lp} \in \mathbb{R}_{>0}$ used to, e.g., eliminate measurement noise and unwanted harmonics~\cite{HNL+2015} or to emulate an inertia response~\cite{SGR+2013}. Recall that the unconstrained GFM droop control \eqref{eq:droop:disc} is used to compute a candidate voltage phase angle and magnitude $(\hat{\theta}(t_k),\hat{V}(t_k))$. Subsequently, we employ the projection developed in Sec.~\ref{sec:proj} to enforce the constraints. The projection in polar coordinates results in
\begin{align}\label{eq:droop:disc:update}
(\theta(t_k),V(t_k)) \coloneqq \Pi^\text{pol}_{\mc C^{\alpha\beta}(t_k)}(\hat{\theta}(t_k),\hat{V}(t_k)),
\end{align}
where $\mc C^{\alpha\beta}(t_k)$ has been constructed using the frequency $\omega_\text{dq}=\omega_\text{dr}$, $v^\text{abc}_\text{ad}(t_k)$ computed by virtual RC damping, and measurements of $v^\text{abc}_\text{f}(t_k)$, and $i^\text{abc}_\text{f}(t_k)$. Thus, if the unconstrained GFM voltage $(\hat{\theta}(t_k),\hat{V}(t_k))$ violates modulation limits and/or is predicted to violate current limits, it is replaced with the closest feasible voltage  $(\theta(t_k),V(t_k))$ as parameterized by the weight $w_\theta \in \mathbb{R}_{>0}$ in \eqref{eq:proj:polar}.

{Notably, the projection \eqref{eq:droop:disc:update} in polar coordinates is often too computationally expensive for implementations on commonly used microcontrollers. Thus, for typical application scenarios, the projection in Cartesian coordinates is recommended. In particular, letting $\hat{v}^\text{dq}_\text{gfm}(t_k)=(\hat{V}(t_k),0)$, the projection $v^\text{dq}_\text{gfm} (t_k) \coloneqq \Pi^\text{dq}_{\mc C^\text{dq}(\hat{\theta})}(\hat{v}^\text{dq}_\text{gfm}(t_k))$ is computed in the dq frame with angle $\hat{\theta}$, and the voltage phase angle and magnitude are recovered} using $\theta(t_k) \coloneqq \|v^\text{dq}_\text{gfm} (t_k)\|$ and $,V(t_k) \coloneqq \angle\, v^\text{dq}_\text{gfm} (t_k)$.
%

{\subsection{Computational Complexity}}\label{subsec:complexity}
{Before investigating embedded controller implementations, computation times of the projection in polar coordinates \eqref{eq:proj:polar} and Cartesian coordinates using \eqref{eq:ADMM} are compared. This comparison is performed in simulation for the system shown in Fig.~\ref{fig:caseib} because real-time computation of \eqref{eq:proj:polar} using general purpose nonlinear programming solvers is beyond the capabilities of typical embedded controllers. Computation times in Simulink on a MacBook Air M2 are provided in Tab.~\ref{tab:comptime}.
\begin{table}[b!!]
\begin{center}
\caption{{Computation time for projecting GFM voltage references} \label{tab:comptime}}
\begin{tabular}{ ccccccc }       
projection & algorithm & $\rho$ & $n_\text{it}$ & min & mean & max \\ \midrule
$\Pi^\text{pol}_{\mc C^{\alpha\beta}}$ & SQP (fmincon) & - & - & $47$~$\mu$s & $83.03$~$\mu$s  & $671$~$\mu$s\\ 
$\Pi^\text{dq}_{\mc C^\text{dq}}$ & ADMM \eqref{eq:ADMM} & $1$ & $10$ & $3$~$\mu$s & $3.66$~$\mu$s  & $10$~$\mu$s\\ 
$\Pi^\text{dq}_{\mc C^\text{dq}}$ & ADMM \eqref{eq:ADMM} & $5$ & $5$ & $1$~$\mu$s & $1.78$~$\mu$s  & $3$~$\mu$s\\
        \bottomrule
\end{tabular}
\end{center}
\end{table}
Next, we focus on the computational complexity of constraint-aware droop control with projection algorithm~\eqref{eq:ADMM}. Table~\ref{tab:comp} summarizes the instructions required to implement constraint-aware droop control, droop control with inner reference current limiter~\cite{FW2022}, dVOC with inner reference current limiting and anti-windup~\cite{ALJ+2022}, and droop control with hybrid threshold virtual impedance (TVI)~\cite{ZBS+2023}. To focus solely on the GFM control, the instruction count does not include common supporting tasks (e.g., I/O, filtering of measurements, active damping, and modulation algorithms). 
Using $n_\text{it}\!=\!5$ for comparison (see Sec.~\ref{sec:scib}), constraint-aware droop control has approximately $2.5$ times to $3.5$ times the computational complexity of widely studied GFM controls. } 
\begin{table}[b!!]{
  \setlength{\tabcolsep}{4pt}
  \centering
  \begin{threeparttable}
    \caption{Instruction count and processor cycles per instruction}
    \label{tab:comp}
    \begin{tabular}{@{}lcccc|cc@{}}
      & \multicolumn{4}{c}{GFM control instruction count} & \multicolumn{2}{c}{cycles / instruction} \\
      \cmidrule(r){2-5}\cmidrule(l){6-7}
      & Fig.~\ref{fig:proj.gfm} & \cite{FW2022} & \cite{ALJ+2022} & \cite{ZBS+2023}
      & TI~\cite{TI_TMS320F2837xD_SPRS880P_2024} & NXP~\cite{Freescale_E600CORERM_2006} \\
      \midrule
      comparison      & $3 n_{\text{it}}$       & $1$  & $1$  & $4$  & $2$   & $5$ \\      
      addition        & $36 n_{\text{it}} + 6$  & $40$ & $46$ & $52$ & $2.5$ & $9$ \\
      multiplication  & $10 n_{\text{it}} + 6$  & $41$ & $64$ & $70$ & $2.5$ & $9$ \\
      division\tnote{1} & $3 n_{\text{it}}$     & $1$  & $1$  & $0$  & $6.5$ & $39$ \\
      square root     & $3 n_{\text{it}} + 1$   & $1$  & $1$  & $2$  & $6.5$ & $39$ \\
      trigonometric   & $3$           & $2$  & $2$  & $2$  & $8.5$ & $1000$ \\
      \midrule
      total instructions          & $55 n_{\text{it}} + 16$ & $86$ & $115$& $130$& \multicolumn{2}{c}{} \\
      \bottomrule
    \end{tabular}
    \begin{tablenotes}[flushleft]
      \item[1] Divisions by constants are counted as multiplications.
    \end{tablenotes}
  \end{threeparttable}}
\end{table}

\begin{figure}[b!]
  \centering
\includegraphics[width=1\columnwidth]{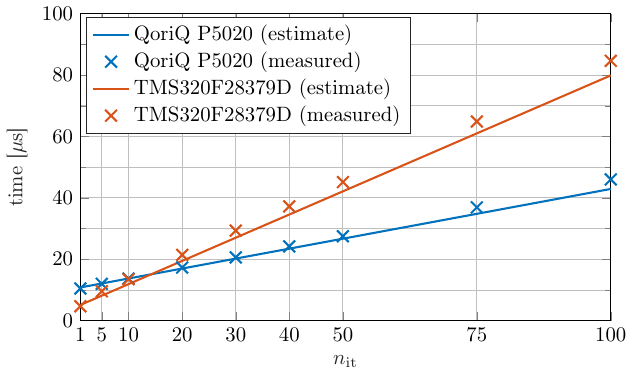}
\caption{{Estimates and actual computation time of constraint-aware GFM control on a TI TMS320F28379D and NXP QoriQ P5020.} \label{fig:cmptime}}
\end{figure}

The resulting computation times for two embedded controller platforms (TI TMS320F28379D and NXP QoriQ P5020) are estimated using the cycles per instruction estimates listed in Tab.~\ref{tab:comp}. Moreover, the controller has been deployed to both platforms using Simulink Coder to measure computation time including supporting tasks (e.g., I/O, filtering measurements, computing duty cycles). Actual computation times and analytical estimates (with the supporting task execution time added)  are shown in Fig.~\ref{fig:cmptime}. Different performance characteristics emerge due to the lower clock speed of the TMS320F28379D and lack of hardware support for trigonometric functions on the QoriQ P5020. Overall, sampling rates up to $50$~kHz are viable for $n_{\text{it}}=5$ on both platforms.

{\subsection{Implications for Ancillary Services and Grid Codes}}\label{sec:gfm:disc} 
Substituting the unconstrained GFM phase angle update \eqref{eq:droop:disc:theta} and $\theta(t_k) = \theta(t_{k-1}) + \tau_\text{ctr} \omega(t_k)$ into the projection \eqref{eq:proj:polar} the controller update \eqref{eq:droop:disc:update} can be rewritten as 
$(\theta(t_k),V(t_k)) = (\theta(t_{k-1})+\tau_\text{ctr} \omega(t_k)),V(t_k)$ with frequency $\omega(t_k)$ given by
\begin{align*}
(\omega(t_k),V(t_k)) \!=\! &\argmin_{\tilde{\omega},\tilde{V}} \; w_\omega (\tilde{\omega} - \omega_\text{dr}(t_k))^2+ (\tilde{V} \!-\! \hat{V}(t_k))^2 \\
&\; \text{s.t. }  \; \mc R_{\text{2D}}(\theta(t_{k-1}) + \tau_\text{ctr} \tilde{\omega}) \begin{bmatrix} V \\ 0 \end{bmatrix} \in \mc C^{\alpha\beta}(t_k)
\end{align*}
and weight $w_\omega = w_\theta \tau_\text{ctr} \in \mathbb{R}_{>0}$. The constraint $\mc C^{\alpha\beta} =  \mc C^{\alpha\beta}_{\text{cur},\tau_\text{ctr}} \cap \mc C^{\alpha\beta}_{\text{cur},\tau_\text{cyc}}$ can be interpreted as enforcing the current limit at the next controller time step and a stabilizing terminal condition~\cite[Ch.~5]{GP2016} on the voltage at $t_k+\tau_\text{cyc}$. Broadly speaking, this constraint ensures existence of a feasible sinusoidal voltage reference with frequency $\omega_\text{dr}$, phase angle $\omega(t_k)$, and magnitude $V(t_k)$ for $t \in [t_k,t_k+\tau_\text{cyc}]$. In abstraction this encodes the requirement to {present} a steady-state voltage {on subtransient timescales (e.g., $1$ to $3$ cycles)} after an event leading to overcurrent (e.g., short-circuit fault) that is currently under discussion for GFM {requirements (see, e.g., 
\cite[p.~6]{UNIFI_Specs_V2_2024}, \cite[p.~21]{ENTSOE_GFM_PPM_Interim_2024})}. In particular, omitting the constraint $v^{\alpha\beta}_\text{gfm} \in \mc C^{\alpha\beta}_{\text{cur},\tau_\text{cyc}}$ can result in tracking the unconstrained droop control \eqref{eq:droop:cont} for too long and reaching a state at which no feasible solution with frequency reasonably close to $\omega_\text{dr}$ exists. Ultimately, this can lead to pathological responses with very large frequency deviations during faults.  

If the unconstrained droop control \eqref{eq:droop:cont} is not feasible, the weight $w_\omega$ determines the trade-off between maintaining a frequency close to the unconstrained droop control \eqref{eq:droop:cont:freq} or a voltage close to the unconstrained droop control \eqref{eq:droop:disc:volt}. 
This interpretation allows to directly quantify and specify the trade-off between frequency and voltage support without using active or reactive current (or power) as proxies. In particular, specifications for voltage support and low voltage ride through for grid-connected converters are commonly formulated as specifications on the reactive current. However, the amount of reactive current needed to provide a desired level of voltage support is ultimately highly dependent on site-specific and time-varying grid parameters (e.g., short-circuit ratio, $\ell/r$ ratio). In contrast, our conceptual framework allows formulating specifications directly on the converter voltage without knowledge of site specific and time-varying grid parameters.

Finally, the proposed control provides a pathway to requirements that only allow momentary cessation or converter tripping if $\mc C^{\alpha\beta}(t_k)$ is empty, i.e., if the projection \eqref{eq:proj:cart} is infeasible (see Sec.~\ref{subsec:feasibleset}).

\subsection{{Controller Parametrization and Tuning}}\label{subsec:tuning}
The tuning of conventional inner vector controls is highly sensitive to grid strength~\cite{QGC+18,SG+20}. Similarly, tuning of threshold virtual impedance limiters is highly sensitive to grid impedance (see \cite[Sec.~6.1]{DPD+2018} and the references therein). 

In contrast, most tuning parameters of constraint-aware droop control are specified a priori. A notable exception is the voltage control time constant $\tau_\text{v} \in \mathbb{R}_{>0}$ in \eqref{eq:droop:disc} that needs to account for grid strength to avoid oscillations that may occur under low SCR if $\tau_\text{v} \in \mathbb{R}_{>0}$ is too small (e.g., if voltage recovery after a fault is too fast). In contrast, $m_p  \in \mathbb{R}_{>0}$ and $m_q  \in \mathbb{R}_{>0}$ are typically given (e.g., by markets or grid codes). The filter time constant $\tau_\text{lp} \in \mathbb{R}_{>0}$ is selected to eliminate measurement noise and unwanted harmonics~\cite{HNL+2015} and, e.g., can be used to emulate an inertia response~\cite{SGR+2013} if required. Moreover, the weight $w_\omega \in \mathbb{R}_{>0}$ expected to be selected in accordance with grid codes or system operator specifications and the remaining parameters of the projection \eqref{eq:proj:cart} are converter parameters.

Efficiently solving the projection may require tuning parameters of the algorithm \eqref{eq:ADMM}. Empirically, the ADMM literature has shown $\alpha = 1.6$ to perform well across vastly different problems~\cite{osqp}. The step size $\rho$ and number of iterations $n_\text{it}$ depends on the required solution tolerance. Notably, smaller filter inductances may require more accurate solutions and hence require smaller step size $\rho$ and an increased number of iterations $n_\text{it}$. However, for typical two-level VSCs filter inductances (e.g., $5$\% to $12$\%), $\rho \in [1,5]$ and five to ten iterations appear sufficient. Finally, for typical sampling rates (i.e., $1$~kHz to $10$~kHz), the impact on $\rho$  and $n_\text{it}$ is expected to be minimal because the one-cycle constraint is typically binding, but the sampling rate only affects the one-step constraint.

\section{{Comparison of projection algorithms}}\label{sec:scib}
To benchmark and compare the performance of the projection \eqref{eq:proj:polar} in polar coordinates and the projection \eqref{eq:proj:cart} in Cartesian coordinates computed using the algorithm \eqref{eq:ADMM}, a single converter connected to an infinite bus and a dc voltage source  (see Fig.~\ref{fig:caseib}) is simulated with the breaker $s_\text{vsc}$ closed. 

\begin{figure}[b!!]
  \centering
\includegraphics[trim={0.55cm 0 0 0},width=0.95\columnwidth,clip]{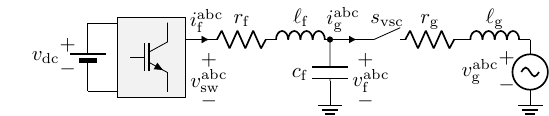}
\caption{Test system containing a single converter connected to an infinite bus.\label{fig:caseib}}
\end{figure}

The short-circuit ratio (SCR) at the converter bus is $7.5$ {and the remaining parameters are summarized in Tab.~\ref{tab:vsc}. To focus on the comparison between the two problem formulations, an averaged model of a three-phase two-level voltage source converter is used and computation delays are neglected in the simulation.} 
\begin{table}[t!]
\begin{center}
\caption{Parameters for the single converter infinite bus system\label{tab:vsc}}
\begin{tabular}{ cccccc }          \toprule 
  \multicolumn{2}{c}{Grid \& dc source} & \multicolumn{2}{c}{VSC} & \multicolumn{2}{c}{GFM control} \\ \toprule 
$V_\text{base}$ & $208$~V & $\ell_\text{f}$ & $0.075$~pu &  $m_p$, $m_q$ & $3$\% \\
$S_\text{base}$ & $2$~kW &  $r_\text{f}$ & $0.0076$~pu & $\tau_\text{v}$ &  $8$~ms \\ 
$f_\text{base}$ & $60$~Hz & $c_\text{f}$ & $0.09$~pu & $\tau_\text{lp}$ & $5.3$~ms \\
{$v_\text{dc}$} & {$400$~V} &  & & $\tau_\text{ctr}$ & $0.1$~ms \\[0.2em] \toprule 
  \multicolumn{2}{c}{Constraints} & \multicolumn{2}{c}{Projection}  & \multicolumn{2}{c}{virtual RC}\\ \toprule
  $V_{\max}$ & $1.178$~pu & $w^\text{pu}_\omega$ & $0.5$ & $k_\text{rc}$ & $0.1$~pu \\
  $i_{\max}$  & $1.2$~pu & $\tau_\text{cyc}$ & $20$~ms & $\omega_\text{rc}$ & $10^4$~rad/s \\  \bottomrule
\end{tabular}
\end{center}
\end{table} 
Due to the infinite bus, the weight $w_\theta$ does not have a significant impact, {and $w_\theta = \frac{w^\text{pu}_{\omega}}{\omega_0 \tau_\text{ctr}}$ with $w^\text{pu}_\omega=1/2$ is selected}. All simulations have been performed using Simscape Electrical with simulation step size $\tau_\text{sim}=1$~$\mu$s. 

\subsection{Symmetric Short-Circuit Fault}\label{subsec:symshort}
A symmetric short-circuit fault is simulated by reducing the infinite bus voltage $v^\text{abc}_\text{g}$ to zero. The projection \eqref{eq:proj:polar} in polar coordinates is computed using fmincon with the sequential quadratic programming (SQP) solver. The projection \eqref{eq:proj:cart} in Cartesian coordinates is computed using the ADMM algorithm \eqref{eq:ADMM} with $\alpha=1.6$. The resulting filter current magnitude, filter voltage magnitude, {and VSC voltage phase angle $\theta_\text{gfm} = \angle v^{\alpha\beta}_\text{sw}$,} are shown in Fig.~\ref{fig:IthV}. 
\begin{figure}[b!]
  \centering
\includegraphics[width=1\columnwidth]{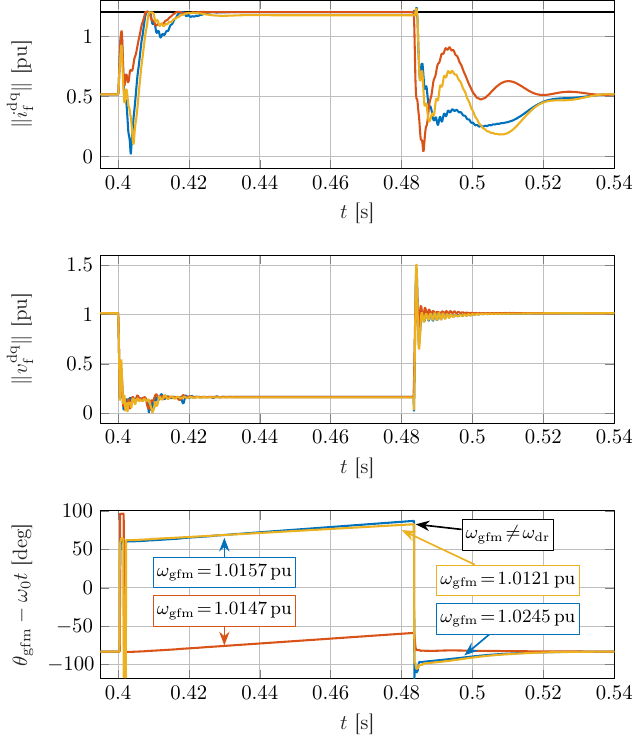}
\caption{Simulation results for a symmetric short-circuit fault applied at the infinite bus in Fig.~\ref{fig:caseib}: $\Pi^\text{pol}_{\mc C^{\alpha\beta}}$ (\protect\blueline), $\Pi^\text{dq}_{\mc C^\text{dq}}$ with $\rho=1$ and $n_\text{it}=10$ (\protect\redline), $\Pi^\text{dq}_{\mc C^\text{dq}}$ with $\rho=5$ and $n_\text{it}=5$ (\protect\yellowline), and current limits (\protect\blackline).
\label{fig:IthV}}
\end{figure}

It can be seen that the current magnitude stays within the prescribed limits and reaches the current limit within one cycle. Moreover, the filter capacitor voltage magnitude settles into steady-state within approximately one cycle and only exceeds the modulation limit during a brief transient after fault clearing. {Please note that, $\theta_\text{gfm} \neq \angle v^{\alpha\beta}_\text{sw}$ when inner (see Fig.~\ref{fig:conv:dualloop}) or parallel limiters (see Fig.~\ref{fig:conv:tvi}) are active. In this case, the actual ac voltage phase angle $\angle v^{\alpha\beta}_\text{sw}$ typically also changes rapidly upon fault inception or clearing, while $\theta_\text{gfm}$ commonly shown in the literature may not.}  {The GFM controller frequency settles to a steady-state within one to two cycles after fault inception and the steady-state frequencies match the reference frequency $\omega_\text{dr}=1.015$ of the droop controller~\eqref{eq:droop:cont:freq} to within $0.04\%$ ($\rho=5$) to $0.29\%$ ($\rho=1$).}

Overall, the two algorithms exhibit very similar responses at fault inception and during the fault. More pronounced differences can be observed after clearing the fault. However, even during the fault recovery the responses exhibit approximately the same settling time and transients are comparable. We note that the projection in {Cartesian coordinates} solved using the ADMM algorithm~\eqref{eq:ADMM} with ten iterations and step size $\rho=1$ results in comparable responses to five iterations at an increased step size of $\rho=5$. With fewer than five iterations, the results degrade significantly. 

Waveforms obtained using the projection $\Pi^\text{dq}_{\mc C^\text{dq}}$ in Cartesian coordinates $n_\text{it}=5$ iterations of the ADMM algorithm~\eqref{eq:ADMM} with step size $\rho=5$ are shown in Fig.~\ref{fig:ib:phase}. 
\begin{figure}[b!]
  \centering
\includegraphics[width=1\columnwidth]{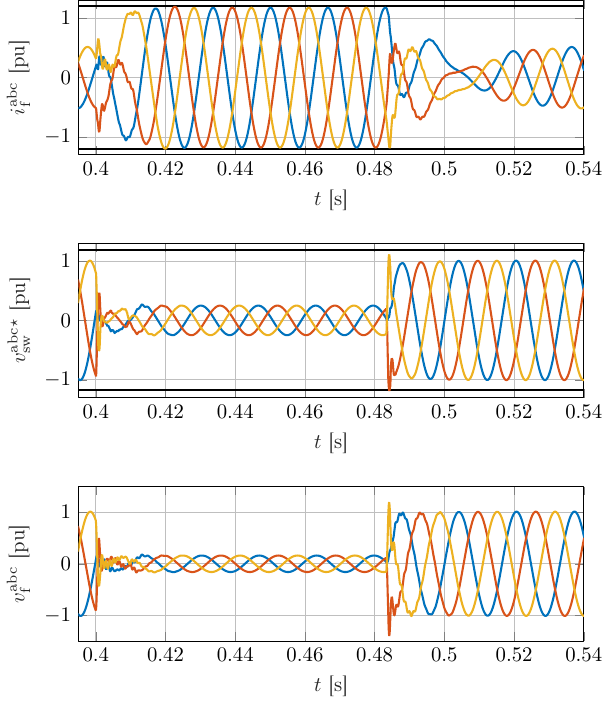}
\caption{{Filter currents, reference voltages, and filter capacitor voltages,}  for a symmetric short-circuit fault applied at the infinite bus in Fig.~\ref{fig:caseib}: phase $a$ (\protect\blueline), phase $b$ (\protect\redline), phase $c$ (\protect\yellowline), and current limits (\protect\blackline).\label{fig:ib:phase}}
\end{figure}
It can be seen that the converter filter voltages settle to balanced sinusoidal voltages within approximately one cycle both during fault inception and fault recovery. {Brief $LC(L)$ filter resonance is observed for $v^\text{abc}_\text{f}$ and $\| v^\text{dq}_\text{f}\|$ (see Fig.~\ref{fig:IthV}) upon fault inception. In particular, current limiting often briefly reduce the effectiveness of the virtual RC damping~\cite{WBL+2015} used to suppress capacitor voltage resonance. Capacitor voltage resonance may be reduced under current and modulation limits at the expense of increased losses by using a damping resistor $r_\text{D}$ (see Fig.~\ref{fig:testbed:schematic} and Sec.~\ref{sec:experiment}). Moreover, the brief voltage transient upon fault clearing is well within the range of typical transient overvoltage ride-through requirements~\cite[Sec.~7.2.3]{IEEE2800-2022}.} At all other times, the assumptions of Prop.~\ref{prop:nonempty} are satisfied and feasibility of the projection is ensured.

Finally, the settling time of the current is mostly dictated by the dynamics of the grid inductance $\ell_\text{g}$ and the medium coupling strength (i.e., SCR of $7.5$) used for this case study, the settling time of the current to steady-state is approximately one cycle upon fault inception and approximately two cycles during fault recovery. Overall, we observe that constraint-aware GFM control results in no significant constraint violation, exhibits rapid current rise time as expected of GFM controls, and rapid resynchronization and fault recovery.
\subsection{Robustness to Parameter Uncertainty}\label{subsec:robustness}
Finally, we investigate the robustness to parameter uncertainty. To ensure that the solution accuracy does not impact the result, we consider the projection $\Pi^\text{dq}_{\mc C^\text{dq}}$ in Cartesian coordinates solved by the ADMM algorithm \eqref{eq:ADMM} with $\rho=1$ and $n_\text{it}=10$. Simulations for a symmetric short-circuit fault at the infinite bus have been conducted using values $\ell^\text{ctr}_\text{f} \in [0.9\,\ell_\text{f},1.05\,\ell_\text{f}] $ and $r^\text{ctr}_\text{f} \in [0,2\,r_\text{f}]$ for controller design. The resulting peak and steady-state overcurrent are shown in Fig.~\ref{fig:ib:rob}.
\begin{figure}[b!!]  
  \centering 
\includegraphics[width=0.99\columnwidth]{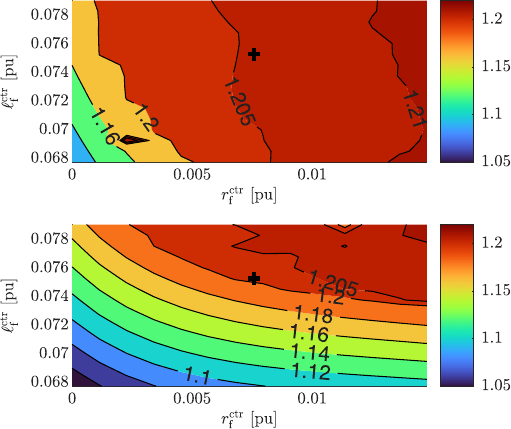}
\caption{Peak (top) and steady-state (bottom) current during a symmetric short-circuit fault when using $\ell^\text{ctr}_\text{f}$ and $r^\text{ctr}_\text{f}$ in the controller instead of the true values $\ell_\text{f}$ and $r_\text{f}$ ($\Plus$).\label{fig:ib:rob}}
\end{figure}


It can be seen that $\ell^\text{ctr}_\text{f} < \ell_\text{f}$ and $r^\text{ctr}_\text{f} < r_\text{f}$ results in underutilization of the converter current capability because the size of $\mc C^\text{dq}$ is reduced. While $\ell^\text{ctr}_\text{f} > \ell_\text{f}$ and $r^\text{ctr}_\text{f} > r_\text{f}$ results in constraint violations, they are not significant when $\ell^\text{ctr}_\text{f} - \ell_\text{f}$ and $r^\text{ctr}_\text{f} - r_\text{f}$ is within typical tolerances. We observe that the fault response exhibits oscillations when $\ell_\text{f}^\text{ctr} > 1.06\, \ell_\text{f}$.

\subsection{{Low-frequency Ride-through}}\label{subsec:lowfr}
{To investigate the impact of grid frequency deviations, a $5\%$ drop of the infinite bus frequency is simulated for the system shown in Fig.~\ref{fig:caseib}. This disturbance reflects typical low-frequency ride-through requirements (see, e.g., \cite[Sec.~7.3.2.1]{IEEE2800-2022}). Simulations for the projection $\Pi^\text{pol}_{\mc C^{\alpha\beta}}$ with $\omega_{\text{dq}}=\omega_\text{dr}$ and the projection $\Pi^\text{dq}_{\mc C^\text{dq}}$ computed using \eqref{eq:ADMM} with $\rho=5$, $n_\text{it}=5$, and $\omega_{\text{dq}}=\omega_0$ are shown in Fig.~\ref{fig:drop}.}
\begin{figure}[t!!!]
  \centering
\includegraphics[width=0.99\columnwidth]{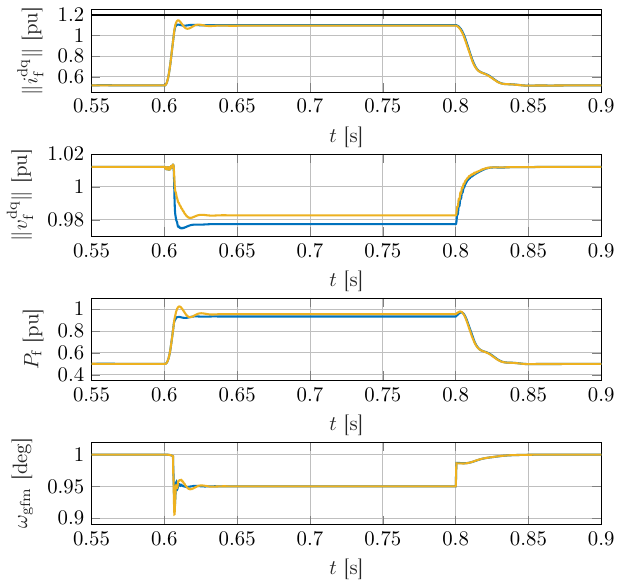}
\caption{{Simulation results for a $5$~\% drop of the grid frequency in Fig.~\ref{fig:caseib}: $\Pi^\text{pol}_{\mc C^{\alpha\beta}}$ with $\omega_{\text{dq}}=\omega_\text{dr}$ (\protect\blueline), $\Pi^\text{dq}_{\mc C^\text{dq}}$ with $\rho=5$, $n_\text{it}=5$, and $\omega_{\text{dq}}=\omega_0$ (\protect\yellowline), and current limit (\protect\blackline). The GFM controller frequency $\omega_\text{gfm}$ is obtained using finite differencing of $\theta(t_k)$.}\label{fig:drop}}
\end{figure}
{The pre-contingency operating point is $P=0.5$~pu and the active power injection of $P=1.5$~pu expected under a $5\%$ frequency drop would exceed the converter current limit at nominal voltage. Instead, the limiter activates and limits the current to approximately $1.1$~pu. Overall, it can be seen that (i) the response is not significantly impacted by the choice of frequency used in the predictor and the projection algorithm, and (ii) operating within the one-step and one-cycle constraint results in preserving approximately $10\%$ headroom in the current. Moreover, the voltage magnitude is impacted by the limiter, but remains within an acceptable range for a $208$~V system. The voltage profile can be improved by increasing the priority of tracking the GFM voltage magnitude under constraints (i.e., by decreasing $w^\text{pu}_\omega$).}

\section{Comparison with Current Reference Limiting and Variable Virtual Impedance}\label{sec:comp}
Next, a simulation of the system in Fig.~\ref{fig:caseib} is used to compare the proposed method with widely used implementations of inner current limiting (see Fig.~\ref{fig:conv:dualloop}) and parallel current limiting (see Fig.~\ref{fig:conv:tvi}). To this end, cascaded vector current and voltage proportional-integral control~\cite{QGC+18} are combined with either a circular current reference limiter \cite[Fig.~2b]{WWZ2024} or variable virtual impedance~\cite[Fig.~15]{QWW+2023}, \cite[Sec.~II-B]{ZBS+2023}. The proportional and integral gain of the current controller are $1$~pu and $0.24$~pu, respectively. The proportional and integral gain of the voltage controller are $0.55$~pu and $0.23$~pu, respectively. The variable virtual impedance current limiter is parameterized as in \cite{ZBS+2023} with steady-state virtual reactance-to-resistance ratio $\rho_\frac{X}{R}=5$, transient reactance-to-resistance ratio of $0.8$, activation threshold of $1$~pu, and high pass filter cutoff frequency of $1000~\mathrm{rad/s}$. The gain of the variable virtual impedance is computed such that the current is limited to $i_{\max}$ during a bolted fault at the converter terminal~\cite[Eq. (2)]{ZBS+2023}. The maximum current of the circular current reference limiter is set to $1.2$~pu. 

\subsection{Synchronization, frequency drop, and short-circuit fault}\label{sec:comp:timedomain}
Simulation results for constraint-aware GFM control, variable virtual impedance, and the circular current reference limiter are shown in Fig.~\ref{fig:TVI}. Initially, the breaker $s_\text{vsc}$ is open. The initial difference between the infinite bus phase angle $\theta_g = \angle v_\text{g}$ and GFM controller phase angle $\theta$ is $180^\circ$. 
\begin{figure}[b!] 
  \centering
\includegraphics[width=1\columnwidth]{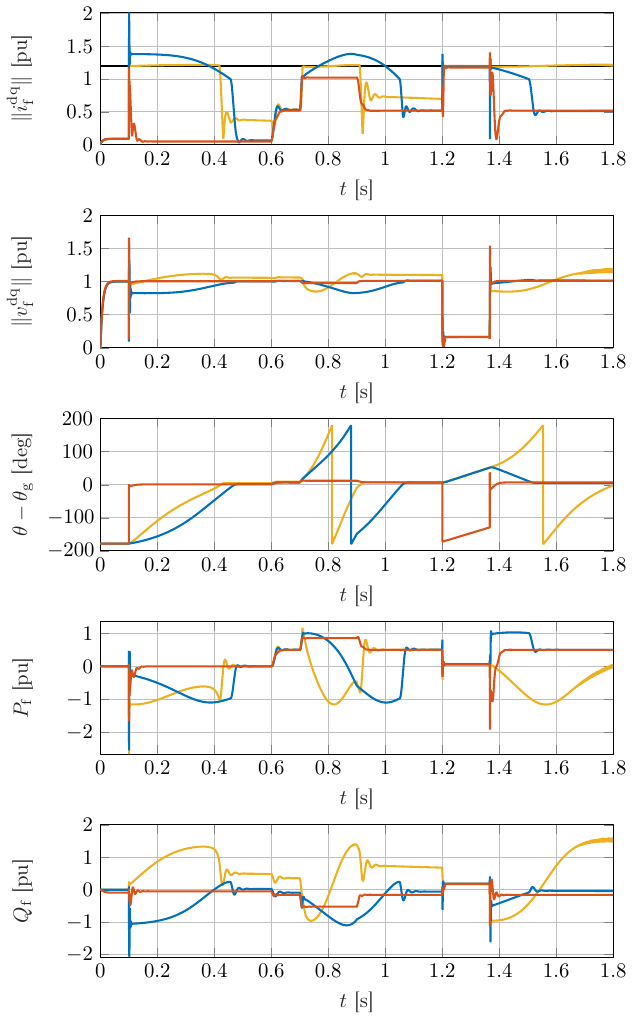}
\caption{Simulation results for the system in Fig.~\ref{fig:caseib}: Reference current limiting (\protect\yellowline), variable virtual impedance (\protect\blueline), constraint-aware GFM control $\Pi^\text{dq}_{\mc C^\text{dq}}$ with $\rho=5$ and $n_\text{it}=5$ (\protect\redline), and current limits (\protect\blackline).\label{fig:TVI}}
\end{figure}

At $t=0.1$~s the breaker $s_\text{vsc}$ closes. Constraint-aware GFM control immediately synchronizes and does not violate the current limit. Variable virtual impedance current limiting violates the current limit, results in a prolonged synchronization transient, and exhibits significant undervoltage. While current reference limiting does not violate the current limit it does not fully resynchronize before the next event and exhibits significant overvoltage. 

At $t=0.6$~s, the power setpoint is updated to $P^\star_\text{f}=0.5$~pu. From $t=0.7$~s to $t=0.9$~s, a $5$~\% drop of the grid frequency is simulated. Constraint-aware GFM control synchronizes to the grid frequency, provides the expected active power response while maintaining headroom in the current, and immediately returns to the pre-contingency operating point after the grid frequency returns to the nominal value. In contrast, variable virtual impedance results in a loss of synchronization of the droop control phase angle $\theta$, a violation of the current limit, and a severe drop in the filter voltage $v_f$. In other words, the combination of GFM droop control and a variable virtual impedance does not provide the expected frequency and voltage support. Notably, in this scenario, providing GFM functions while remaining within the converter constraints requires modifying the frequency of the GFM controller, but variable virtual impedance can only modify the GFM voltage $v^{\text{abc}}_\text{gfm}$ in proportion to the overcurrent. While current reference limiting does not violate the current limit, it also results in a loss of synchronization, significant under and overvoltage, and a prolonged resynchronization transient.

Finally, from $t=1.2$~s to $t=1.3667$~s, a symmetric short-circuit fault is simulated by reducing $v^\text{abc}_\text{g}$ to zero. While constraint-aware GFM control resynchronizes immediately upon fault clearing, a prolonged resynchronization transient is observed for the virtual variable impedance current limiter. While current reference limiting does not violate the current limit, the fault duration exceeds its critical clearing time and the GFM control with current reference limiting permanently loses synchronization and fails to resynchronize after the fault is cleared.

\subsection{Voltage support during grid faults}\label{sec:comp:mag}
Next, we compare the voltage support capabilities of constraint-aware GFM control and variable virtual impedance during a sag of the grid voltage to $0.2$~pu. While the optimal tuning of the variable virtual impedance requires knowledge of the grid reactance-to-resistance ratio~\cite[Sec.~V]{GD2019},\cite[Sec.~III]{WWZ2024}, neither the design nor the tuning of constraint-aware GFM control require this information. To empirically demonstrate the implications of this observation for voltage support, we compare constraint-aware GFM control, variable virtual impedance with the optimal tuning $\rho_\frac{X}{R}=\omega_0 \ell_\text{g}/r_\text{g}$, and variable virtual impedance with worst-case tuning $\rho_\frac{X}{R}=0$. Simulation results for $\omega_0 \ell_\text{g}/r_\text{g}$ ranging from $2$ to $7$ are summarized in Fig.~\ref{fig:mag}.
\begin{figure}[h!]
  \centering
\includegraphics[width=1\columnwidth]{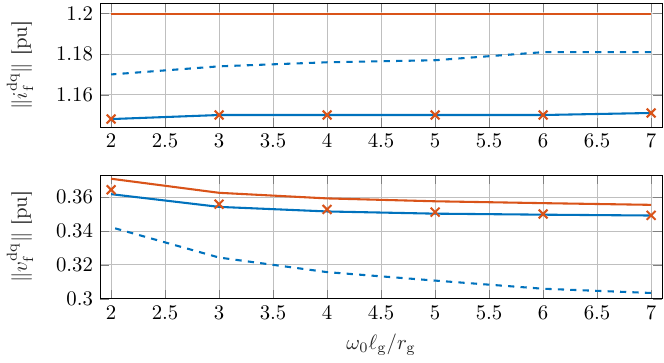}
\caption{Current and voltage magnitude under a grid voltage sag to $0.2$~pu as a function of grid reactance-to-resistance ratio:  variable virtual impedance with $\rho_\frac{X}{R}=\omega_0 \ell_\text{g}/r_\text{g}$ (\protect\blueline), variable virtual impedance with $\rho_\frac{X}{R}=0$ (\protect\dashedblueline), constraint-aware GFM control with $i_{\max}=1.2$~pu (\protect\redline) and $i_{\max}$ matched to variable virtual impedance with $\rho_\frac{X}{R}=\omega_0 \ell_\text{g}/r_\text{g}$  (\protect\redmark).\label{fig:mag}} 
\end{figure} 
Because the variable virtual impedance is parameterized to limit the current to $i_{\max}=1.2$ during the worst-case fault at the converter terminal, it is not utilizing the full maximum current during the voltage sag behind a grid impedance. As predicted by~\cite[Sec.~V]{GD2019},\cite[Sec.~III]{WWZ2024}, the variable virtual impedance with optimal tuning outperforms the worst-case tuning $\rho_\frac{X}{R}=0$, i.e., achieves higher VSC terminal voltage with lower current magnitude. Notably, compared to the variable virtual impedance with optimal tuning, constraint-aware GFM control reaches the maximum current and provides improved voltage support. To demonstrate that constraint-aware GFM control can match the voltage support of variable virtual impedance with optimal tuning without knowledge of the grid reactance-to-resistance ratio, the current limit of constraint-aware GFM control is adjusted to match the current magnitude of variable virtual impedance. It can be seen that constraint-aware GFM control closely matches the voltage support of the variable virtual impedance without knowledge of the grid reactance-to-resistance ratio.

\begin{figure}[b!]
  \centering
\includegraphics[trim={0.55cm 0 0 0},width=0.85\columnwidth,clip]{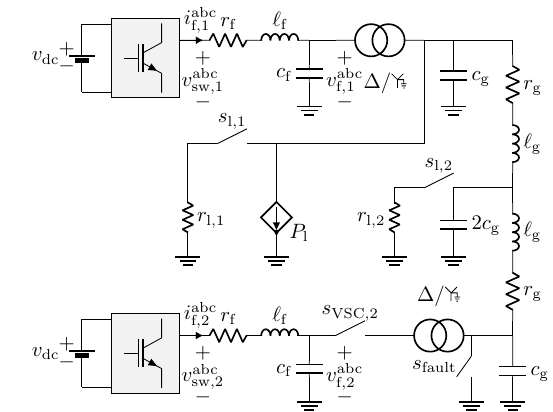}
\caption{Two bus system consisting of two $1$~MW converters, $\Delta\wyegnd$ step up transformers, two $1$~km distribution lines with inductance $\ell_\text{g}$, resistance $r_\text{g}$, and shunt capacitance $\ell_\text{g}$, a $500$~kW resistive load ($r_{\text{l},1}$), a $250$~kW resistive load ($r_{\text{l},1}$), and a time-varying constant power load $P_\text{l}$.\label{fig:twobus}}
\end{figure}
\begin{figure*}[t]
  \centering
\includegraphics[width=1\textwidth,clip]{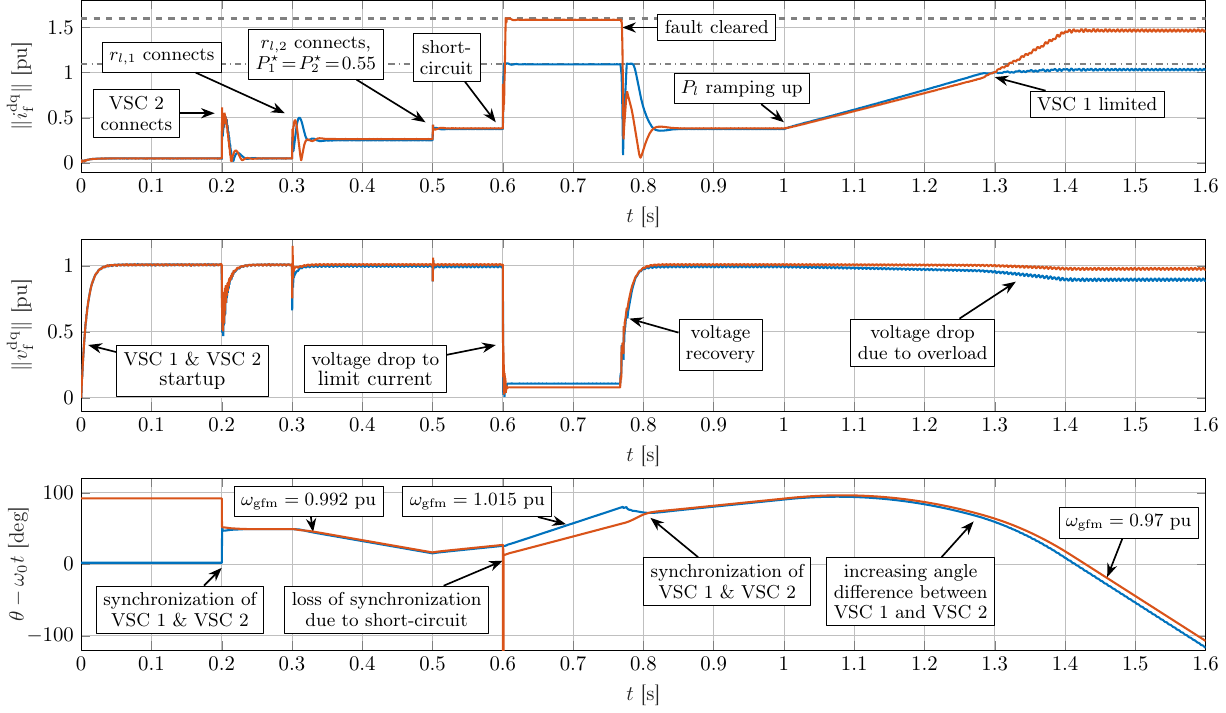}
\caption{Simulation results for the two bus system: VSC~1 (\protect\blueline), VSC~2 (\protect\redline), $i_{\max,1}$ (\protect\greydd), $i_{\max,2}$ (\protect\greyd).\label{fig:twobus:simres}}
\end{figure*}

\section{Interactions Between Converters}\label{sec:twobus}
Next, we simulate the two-bus system shown in Fig.~\ref{fig:twobus} to illustrate interactions of converters during synchronization, fault recovery, and overload. This simulation {uses two-level VSCs energized by a dc voltage source, IGBT switch model, sinusoidal pulse width modulation (PWM) with $10$~kHz switching frequency,} $3~\mu$s controller computation delay, and simulation step size $\tau_s=0.1~\mu$s. The parameters for this case study are provided in Tab.~\ref{tab:twobus}.
\begin{table}[b!]
\begin{center}
\caption{Parameters for the two bus system\label{tab:twobus}}
\begin{tabular}{ cccccc }          \toprule 
    \multicolumn{2}{c}{Grid base values} & \multicolumn{2}{c}{Filter (in VSC base)} & \multicolumn{2}{c}{VSC base values} \\ \toprule 
$V_\text{base}$ & $4.16$~kV &  $\ell_\text{f}$ & $0.1$~pu & $V_{\text{base},\text{vsc}}$ & $480$~V \\
$S_\text{base}$ & $1.5$~MW &  $r_f$ & $0.01$~pu & $S_{\text{base},\text{vsc}}$ &  $1$~MW \\ 
$f_\text{base}$ & $60$~Hz & $c_\text{f}$ & $0.05$~pu & $v_\text{dc}$ & $1000$~V  \\[0.2em] \toprule 
  \multicolumn{6}{c}{Control} \\ \toprule 
  $m_p$, $m_q$ & $3$\% & $\tau_\text{lp}$ & $5.3$~ms & $k_\text{rc}$ & $0.1$~pu\\
  $w^\text{pu}_\omega$ & $0.1$ & $\tau_\text{ctr}$ & $0.1$~ms & $\omega_\text{rc}$ & $10^4$~rad/s \\ 
  $\tau_\text{v}$ &  $8$~ms &  $\tau_\text{cyc}$ & $16.7$~ms & $f_\text{PWM}$ & $10$~kHz \\[0.2em] \toprule 
  \multicolumn{2}{c}{Constraints} & \multicolumn{2}{c}{distribution line (per km)}  & \multicolumn{2}{c}{Transformer}\\ \toprule
  $V_{\max}$ & $1.276$~pu &   $\ell_\text{g}$ & $\!5.56\!\times\!10^{-2}\!$~pu & $\ell_{\Delta\wyegnd}$ & $0.03$~pu \\
   $i_{\max,1}$ & $1.1$~pu  & $r_\text{g}$ & $\!1.82\!\times\!10^{-2}\!$~pu & $r_{\Delta\wyegnd}$ & $0.002$~pu  \\
   $i_{\max,2}$ & $1.6$~pu & $c_\text{g}$ & $\!4.34 \!\times\! 10^{-5}\!$~pu \\   \bottomrule
\end{tabular}
\end{center}
\end{table}
To facilitate a trade-off between stabilizing voltage magnitude and frequency (see Sec.~\ref{sec:gfm:disc}) we select $w_\theta = \frac{w^\text{pu}_{\omega}}{\omega_0 \tau_\text{ctr}}$ with $w^\text{pu}_\omega=0.1$. A larger penalty $w^\text{pu}_\omega$ results in closer tracking of the GFM frequency \eqref{eq:droop:cont:freq} but a potential loss of stability during scenarios that require significant voltage support. 

Simulation results for current magnitude, voltage magnitude, and voltage phase angle are shown in Fig.~\ref{fig:twobus:simres}. The system is first energized by VSC~1 and VSC~2 starts up disconnected from the system. Next, at $t=0.2$~s the breaker $s_{\text{VSC},2}$ closes to connect VSC~2 to the system. Subsequently, the breaker $s_{l,1}$ and $s_{l,2}$ close at $t=0.3$~s and $t=0.5$~s respectively to connect $500$~kW and $250$~kW resistive load. Moreover, at $t=0.5$~s the active power set points of VSC~1 and VSC~2 are updated to $0.55$~pu. Finally, from $t=0.6$~s to $t=0.76$~s a short-circuit fault is simulated by closing $s_\text{fault}$. Finally, from $t=1$~s to $t=1.4$~s the constant power load $P_l$ ramps from $0$~kW to $800$~kW. We emphasize that no additional startup or synchronization control is used. 
\begin{figure}[h!]
  \centering
\includegraphics[width=0.99\columnwidth,clip]{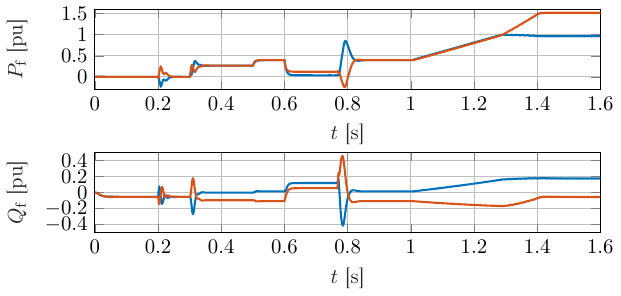}
\caption{Active and reactive power injection of VSC~1 (\protect\blueline) and VSC~2 (\protect\redline).\label{fig:twobus:simres:PQ}}
\end{figure} 

It can be seen that the VSCs operate within their limits and rapidly synchronize with each other upon connection of VSC~2 and fault clearing. Moreover, after VSC~1 becomes current limited VSC~2 picks up the subsequent increase of the constant power load $P_l$ while VSC~1 continues to provide  voltage support by injecting reactive power and maintains a few percent of headroom to retain some GFM features (e.g., voltage support). After VSC~2 becomes limited, the frequency drops to $0.97$~pu and significantly deviates from the design assumption $\omega_\text{dq}  =\omega_0$ of \eqref{eq:ADMM}. The active and reactive power injection of VSC~1 and VSC~2 are shown in Fig.~\ref{fig:twobus:simres:PQ}.

Increasing the total load beyond {$1.55$~MW} results in insufficient headroom to maintain voltage and frequency stability as both VSCs become overloaded. Using variable virtual impedance or current reference limiting for both VSCs with the parameters used in Sec.~\ref{sec:comp}, increasing the load beyond $1.31$~MW (variable virtual impedance) and $1.41$~MW (current reference limiting) results in voltage collapse. In other words, constraint-aware GFM control can accommodate $10\%$ to $20\%$ higher load in this case study than commonly used current limiters.

Finally, current and voltage waveforms for VSC~2 during connection and fault inception are shown in Fig.~\ref{fig:twobus:simres:wave}. It can be seen that voltages quickly settle to post-event steady-states and filter resonance is well controlled and $LC(L)$ filter resonance (see Sec.~\ref{subsec:symshort}) is suppressed within half a cycle after connecting VSC~2 and fault inception.
\begin{figure}[h!]
  \centering
\includegraphics[width=0.99\columnwidth,clip]{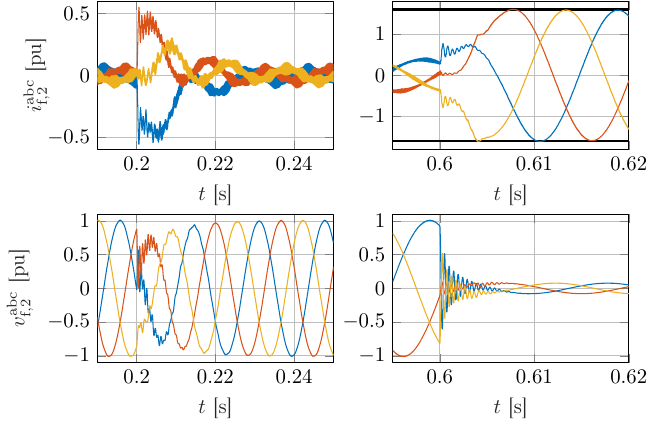}
\caption{Phase currents and voltages for VSC~2 during connection and fault inception: phase $a$ (\protect\blueline), phase $b$ (\protect\redline), phase $c$ (\protect\yellowline), and current limits (\protect\blackline). \label{fig:twobus:simres:wave}}
\end{figure}

\section{Hardware experiments}\label{sec:experiment}
{\subsection{Description of the Hardware Testbed and Experiments}}
To validate the results, the proposed control has been implemented on an experimental testbed (see Fig.~\ref{fig:testbed:picture}) consisting of a VSC, grid impedance $\ell_\text{g}$, and voltage source emulating the grid voltage $v^\text{abc}_\text{g}$ (see Fig.~\ref{fig:testbed:schematic} and Tab.~\ref{tab:testbed}).
\begin{figure}[h!]
  \centering
  \includegraphics[width=0.99\columnwidth]{./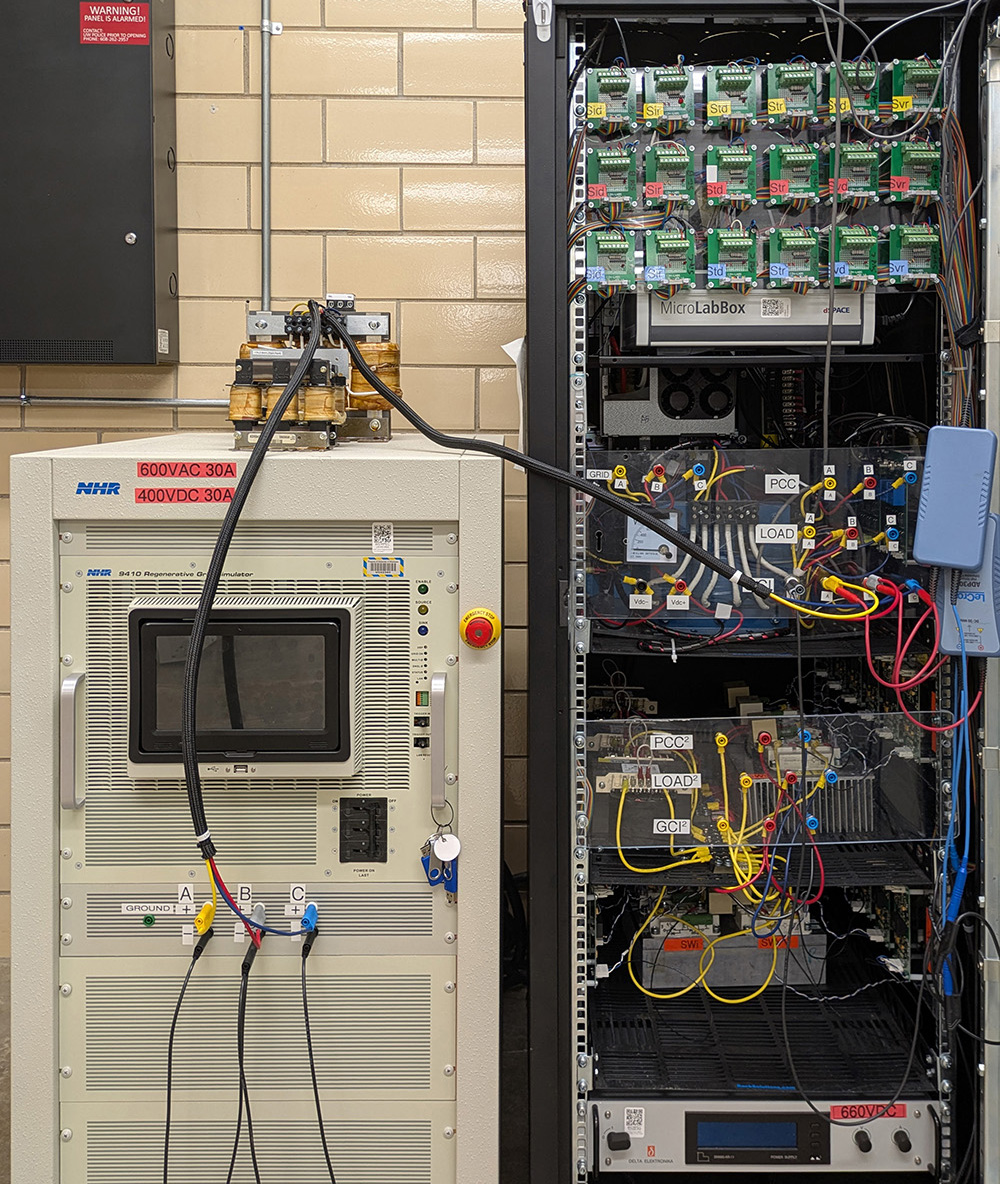}
\caption{Experimental testbed with NHR-9420 bidirectional ac source, inductors used for grid impedance emulation, voltage source converter, and dSPACE MicroLabBox.\label{fig:testbed:picture}}
\end{figure}
\begin{figure}[b!]
  \centering
\includegraphics[trim={0.55cm 0 0 0},width=0.75\columnwidth,clip]{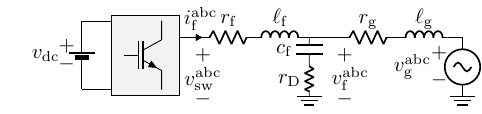}
\caption{Experimental setup consisting of a two-level VSC, grid impedance $\ell_\text{g}$, $r_\text{g}$, and grid emulator $v^\text{abc}_\text{g}$.\label{fig:testbed:schematic}}
\end{figure}

\begin{figure*} 
  \centering
\includegraphics[width=1\textwidth]{./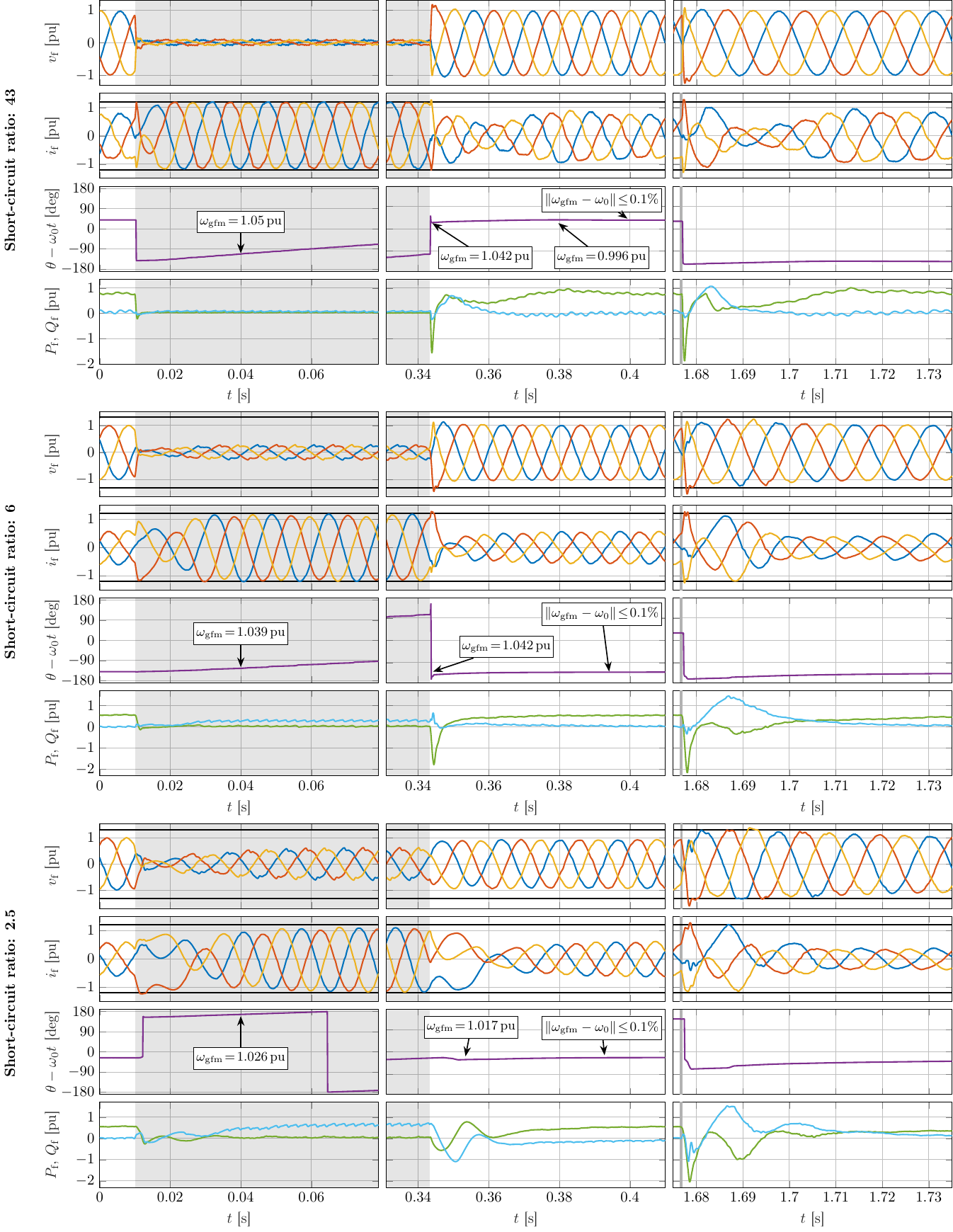}
\vspace{-2em}
\caption{Experimental results for a symmetric fault from $t\!=\!0.01$~s to $t\!=\!0.3433$~s (gray) and $180^\circ$ phase jump at $t\!=\!1.6767$~s (gray): filter voltages and currents (phase $a$ (\protect\blueline), phase $b$ (\protect\redline), phase $c$ (\protect\yellowline)), current and modulation limits (\protect\blackline), GFM control phase angle (\protect\purpleline), active  (\protect\greenline)  and reactive  (\protect\cyanline) power.\label{fig:exp:all}}
\end{figure*}
\begin{table}[b!]
\begin{center}
\caption{Parameters for the experimental setup\label{tab:testbed}}
\begin{tabular}{ cccccc }          \toprule 
  \multicolumn{2}{c}{Grid} & \multicolumn{4}{c}{VSC} \\ \toprule 
$V_\text{base}$ & $186$~V & $v_\text{dc}$ & $400$~V & $r_\text{D}$ & $0.31$~pu  \\
$S_\text{base}$ & $1.788$~kW & $\ell_\text{f}$ & $0.084$~pu & $V_{\max}$ & $1.31$~pu  \\ 
$f_\text{base}$ & $60$~Hz & $c_\text{f}$  & $0.08$~pu & $i_{\max}$  & $1.2$~pu  \\[0.2em] \toprule 
  \multicolumn{6}{c}{GFM control} \\ \toprule
  $m_p$ & $5$\% & $\tau_\text{lp}$ & $5.3$~ms  & $\tau_\text{ctr}$ & $0.1$~ms\\
  $m_q$ & $5$\% & $w^\text{pu}_\omega$ & $0.2$  & $\tau_\text{cyc}$ & $20$~ms \\  \bottomrule
\end{tabular}
\end{center}
\end{table}
A $10$~kW two-level VSC previously used in~\cite{CJS2020} has been derated for safety reasons and the $LC$ filter has been modified to match the new rating of $1.788$~kW. A NHR-9420 bidirectional ac source is used to emulate the grid voltage $v^\text{abc}_\text{g}$. A Delta Elektronika SM660-AR-11 unidirectional dc source is used to emulate the converter dc bus and, in abstraction, model energy storage.

Constraint-aware GFM control {shown in Fig.~\ref{fig:proj.gfm} with projection algorithm shown in Fig.~\ref{fig:ADMM}} is implemented {on the NXP QoriQ P5020 microprocessor of a dSPACE MicroLabBox using Simulink Coder}. Sinusoidal PWM with switching frequency $f_\text{sw}=10$~kHz and synchronous sampling with controller update at the PWM midpoint are used. The ADMM algorithm \eqref{eq:ADMM} with $\rho=5$, $n_\text{it}=5$, and $\alpha=1.6$ is used to compute the projection onto the constraint set. {The overall computation time is $12$~$\mu$s on a single core of the microprocessor. Notably, only $2$~$\mu$s are spent on evaluating the GFM control, while $10$~$\mu$s are spent on I/O tasks, filtering measurements, and computing duty cycles. An in-depth discussion of the computational complexity of the proposed constraint-aware GFM control can be found in Sec.~\ref{subsec:complexity}.}

Finally, various off-the-shelf three-phase line reactors (TCI KDRB43H, TCI KDRB43H, TCI KDRH1FE, and MTE RL-04503) with reactance-to-resistance ratio $\omega_0 \ell_\text{g} / r_\text{g} \in [20,40]$ are used to replicate a grid SCR of approximately $2.5$ ($\ell_\text{g}=21.2$~mH), $6$ ($\ell_\text{g}=8.6$~mH), and $43$ ($\ell_\text{g}=1.2$~mH). The experimental testbed is shown in Fig.~\ref{fig:testbed:picture}. {A symmetric short-circuit fault is emulated} by setting $V_\text{g}=0$. A $180^\circ$ phase jump of the grid voltage $v^\text{abc}_\text{g}$ is used to demonstrate the ability of the control to synchronize from arbitrary initial angles without significant violation of the current limit. Signals captured at the PWM midpoint {for a wide range of short-circuit ratios} are shown in Fig.~\ref{fig:exp:all}.

\subsection{{Very High Short-Circuit Ratio}}
While GFL converters exhibit challenges under weak grid connection (i.e., low SCR), GFM converters typically exhibit various challenges under strong grid connection (i.e., high SCR)~\cite{GCB+2019,LGG2022}. To verify that the proposed control successfully limits the current even under extreme scenarios, an SCR significantly exceeding the typical range (i.e., $2$ to $20$) and pre-contingency operating point of $P^\star=0.8$ and $Q^\star=0$ has been selected. The results for a short-circuit ratio of $43$ and {$\tau_\text{v}=8$~ms} show rapid settling of the voltage to post-contingency steady-state during fault inception, fault clearing, and after the $180$-degree phase jump. While the converter and grid voltage exhibit no significant distortion, the pre- and post-fault current and power exhibit distortions that can be attributed to tolerance of phase inductances relative to the low impedance between the two parallel voltage sources. This effect is unrelated to the proposed limiter. 

While brief and small violations of the current limit occur immediately after the phase jump, the current remains well within typical short-term and long-term VSC current limits. During the short-circuit fault, the average frequency (i.e., derivative of $\theta(t)$ in Fig.~\ref{fig:exp:all}) is approximately $1.05$~pu, i.e., slightly higher than the $1.04$~pu expected for $P^\star=0.8$ and $m_p=0.05$. The limiter deactivates $0.5$~ms after fault clearing and a two to three cycle resynchronization transient of the unconstrained droop control can be observed. Due to the strong coupling, this transient significantly impacts the current. Finally, we note that harmonics can be observed in the voltage that can be attributed to the converter switching.

\subsection{Medium and Low Short-Circuit Ratio}
For the remaining experiments we use a pre-contingency operating point of $P^\star=0.55$ and $Q^\star=0$. Moreover, for {lower} short-circuit ratios the physics of the {grid} necessitate slowing down the voltage recovery to avoid oscillations. To account for this aspect, $\tau_\text{v}=8$~ms is used for short-circuit ratios above $5$ while $\tau_\text{v}=16$~ms is used for short-circuit ratios below $5$. The results for medium and low short-circuit ratios exhibit increased settling times of voltages and currents due to increased time constants of the grid current. The VSC currents again only exhibit small and brief violations of the current limit that are well within typical short-term and long-term VSC current limits. While the voltage is well controlled during the short-circuit fault, some harmonics are visible that may be attributed to the fact that the inductors used for grid impedance emulation are not perfectly phase-balanced. The average frequency for both cases is approximately $1.025$~pu, i.e., as expected for $P^\star\!=\!0.5$ and $m_p\!=\!0.05$.

\subsection{Unbalanced Fault}
Finally, the response to a phase $a$ to ground fault is tested using the setup shown in Fig.~\ref{fig:testbed:unbal}. Experimental results are shown in Fig.~\ref{fig:exp:unbal}. Despite only being designed for balanced systems, the proposed control successfully limits the current. However, the resulting waveforms are distorted because the control design does not account for negative sequence components that result in non-constant steady-state components in the positive sequence dq-frame of the controller.
\begin{figure}[h!]
  \centering
\includegraphics[trim={0.55cm 0 0 0},width=0.85\columnwidth,clip]{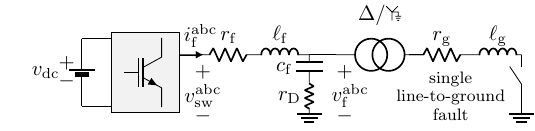}
\caption{Two-level VSC, $\Wyegnd / \Delta$ transformer with $2.6$~\% impedance, and $7.2$~mH grid impedance.\label{fig:testbed:unbal}}
\end{figure}
\begin{figure}[h]
  \centering
\includegraphics[width=0.99\columnwidth]{./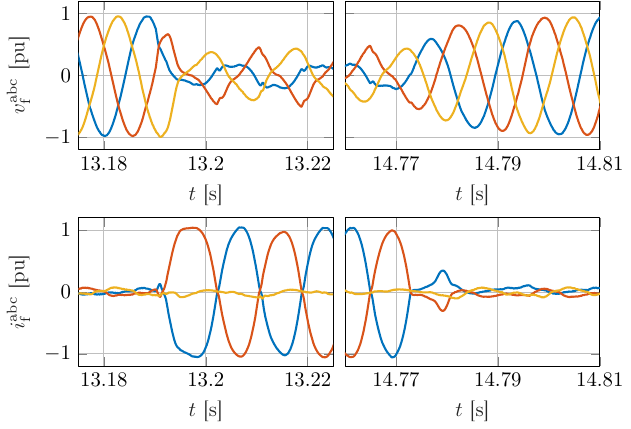}
\caption{VSC filter voltages and currents (phase $a$ (\protect\blueline), phase $b$ (\protect\redline), phase $c$ (\protect\yellowline)) during a phase $a$ to ground fault in a grid connected via a $\Wyegnd / \Delta$ transformer.\label{fig:exp:unbal}}
\end{figure}

\section{Conclusion and Outlook}\label{sec:concl}
In this work, a constraint-aware GFM control is developed that explicitly integrates constraints (i.e., current and modulation limits) into GFM droop control. To enable a systematic control design, the set of feasible converter voltages (i.e., that do not result in constraint violations) is characterized and analyzed. Moreover, an efficient algorithm for projecting voltages onto the feasible set is developed. Subsequently, these preliminary results are combined with GFM droop control to obtain a constraint-aware GFM control that minimizes the deviation from the well-known unconstrained GFM droop response under modulation and current limits. The resulting constraint-aware GFM control enables fast current limiting, enhanced transient stability, and infinite critical clearing time. Simulation studies have been used to illustrate and validate the constraint-aware GFM control for single-converter infinite-bus and two-converter test cases. Finally, hardware experiments have been conducted to validate the response to short-circuit faults and phase jumps. While the results are encouraging, this work opens numerous further research directions. {In particular, extensions to unbalanced grid conditions and faults are required to make the results practical in transmission system applications. Moreover, considering resource dynamics (e.g., PV, wind turbines) and constraints is seen as an important topic for future work.}

\bibliographystyle{IEEEtran}
\bibliography{IEEEabrv,bib_file}

\end{document}